\newtheorem{theorem}{Theorem}[section]
\newtheorem{lemma}[theorem]{Lemma}
\newcommand{\bea}{\begin{eqnarray*}}
\newcommand{\eea}{\end{eqnarray*}}
\newcommand{\bean}{\begin{eqnarray}}
\newcommand{\eean}{\end{eqnarray}}
\newcommand{\lra}{\longrightarrow}
\newcommand{\sg}{\Sigma}
\newcommand{\what}{\widehat}
\newcommand{\bbR}{\mathbb{R}}
\newcommand{\bbE}{\mathbb{E}}
\begin{document}
\doublespacing
\title{\large \textbf{Bayesian Group Selection in Logistic Regression with Application to MRI Data Analysis}}
\author[1]{Kyoungjae Lee\footnote{Corresponding author.}}
\author[2]{Xuan Cao}
\affil[1]{Department of Statistics, Inha University}
\affil[2]{Department of Mathematical Sciences, University of Cincinnati}

\maketitle
\begin{abstract}
	We consider Bayesian logistic regression models with group-structured covariates.
	In high-dimensional  settings, it is often assumed that only small portion of groups are significant, thus consistent group selection is of significant importance.
	While consistent frequentist group selection methods have been proposed, theoretical properties of Bayesian group selection methods for logistic regression models have not been investigated yet.
	In this paper, we consider a hierarchical group spike and slab prior for logistic regression models in high-dimensional settings.
	Under mild conditions, we establish strong group selection consistency of the induced posterior, which is the first theoretical result in the Bayesian literature.
	Through simulation studies, we demonstrate that the performance of the proposed method outperforms existing state-of-the-art methods in various settings.
	We further apply our method to an MRI data set for predicting Parkinson's disease and show its benefits over other contenders.
\end{abstract}

Key words: High-dimensional, group spike and slab prior, strong selection consistency

\section{Introduction} \label{sec:introduction}
Consider a standard logistic regression model 
\begin{align*}
P\left(E_i = 1 \mid x_i,\beta_{1},\ldots,\beta \right) = \frac{\exp\left\{ x_{i}^T\beta \right\}}{1+\exp\left\{ x_{i}^T\beta \right\}}, \quad i = 1,2, \ldots, n,
\end{align*}
where $E_i \in \{0,1\}$ is a binary response, $x_i\in\bbR^p$ is a vector of covariates and $\beta \in \bbR^p$ is a vector of coefficients.
When the number of covariates, $p$, is much larger than the sample size, $n$, variable selection is one of the most important tasks to uncover significant covariates which actually affect a response.
Nowadays, such high-dimensional data sets, including Magnetic Resonance Imaging (MRI), climate and gene expression data, are very common in various areas.

In many applications, a group structure of covariates often arises naturally, in which the $p$-dimensional coefficient vector $\beta$ can be divided into many smaller vectors.
Once a group structure of covariates is given, it is assumed that the coefficients in the same group are all zeros or all nonzeros.
For example, in a regression model with a multi-level categorical variable, a group of dummy variables encodes the same factor, thus their coefficients are all zeros (nonzeros) if the corresponding categorical variable is insignificant (significant).
For more examples for applications with group-structured covariates, we refer \cite{Xu:Ghosh:2015} and \cite{Yang:Naveen:2018}.
Given a group structure, variable  selection problem becomes a {\it group selection} problem to detect significant groups, so a tailored variable selection method incorporating the group information is required for efficient inference \citep{breheny2009penalized}.
Following the setup in \cite{Xu:Ghosh:2015} and \cite{Wen:2019}, we will assume that the covariates are arranged according to a known group information that are formed naturally.


In the literature, various statistical methods for logistic regression models with group-structured covariates have been developed.
\cite{meier2008group} proposed the group lasso for logistic regression, which is shown to be consistent even in high-dimensional settings.
However, since lasso-type methods in general require the irrepresentable condition for the design matrix to consistently recover the true support \citep{wainwright2009sharp}, several methods were developed in an effort to improve the selection performance including the group smoothly clipped absolute deviation (SCAD) \citep{wang2007group}, group minimax concave penalty (MCP) \citep{breheny2009penalized} and group exponential lasso \citep{breheny2015group}. 
Recently, \cite{wang2019adaptive} investigated theoretical properties of the adaptive group lasso for generalized linear model (GLM) in high-dimensional settings, and showed that the adaptive group lasso enjoys selection consistency under certain regularity conditions.

On the Bayesian side, \cite{Xu:Ghosh:2015} suggested the Bayesian group lasso with spike and slab priors in linear regression settings.
Under the orthogonal design, they showed variable selection consistency of the posterior median.
\cite{Yang:Naveen:2018} considered a similar Bayesian hierarchical model for group selection in linear regression settings and  obtained strong selection consistency.
However, both \cite{Xu:Ghosh:2015} and \cite{Yang:Naveen:2018} focused on linear regression models, and their theoretical results are not directly applicable to logistic regression models.
Recently, \cite{Naveen:2018} proposed the skinny Gibbs sampler for high-dimensional logistic regression models and established strong selection consistency, but their method is not suitable when there is a group structure between covariates.
Furthermore, they showed selection consistency of the {\it approximate} posterior, called the skinny Gibbs posterior, rather than the {\it true} posterior.
\cite{tang2018group} and \cite{Wen:2019} proposed the group spike and slab lasso \citep{rovckova2018spike} and the grouped automatic relevance determination priors for logistic regression models, respectively, but theoretical properties of posteriors were not investigated.
To the best of our knowledge, strong selection consistency of posteriors for group selection in logistic regression has not been established yet.

In this paper, we propose a Bayesian hierarchical model for group selection in high-dimensional logistic regression models.
To select the groups of significant covariates, we use the group spike and slab prior consisting of a point mass spike and a multivariate normal slab components.
Under regularity conditions, we establish posterior ratio consistency (Theorem \ref{thm:ratio}), which guarantees that the truely significant groups, say the {\it true model}, will be the mode of our posterior with large probability tending to 1.
We further show that, under similar conditions, our posterior enjoys strong selection consistency (Theorem \ref{thm:selection}), which implies that the posterior probability assigned to the true model converges in probability to 1 as we observe more data.
For posterior computation, we suggest two algorithms, the standard Gibbs sampler and its variant by introducing the neuronized prior form \citep{Shin:Liu:2018}, and investigate their performances in various simulation settings.
In simulation studies, we demonstrate that the proposed Bayesian method in this paper can outperform other state-of-the-art statistical methods.
Finally, the proposed method is applied to an MRI data set for predicting Parkinson's disease. 
Specifically, we divide the voxels from the three-dimensional MRI image data into several groups according to prior structural information (brain atlases), and the voxels within the same brain region are automatically organized into a single group. 
We attempt to select a small proportion of these groups for diagnosis of Parkinson's disease, and it turns out that our method has better prediction performance compared with other contenders.

The rest of the paper is organized as follows.
In Section \ref{sec:model spec}, the group spike and slab prior for a logistic regression model is described.
We show strong selection consistency results in Section \ref{sec:selection cons}, where proofs are provided in Supplementary material.
Posterior computation algorithms are described in Section \ref{sec:post_comp}, and we show the performance of the proposed method and compare it with other competitors through simulation studies in Section \ref{sec:sim}.
In Section \ref{sec:real}, a real data analysis is conducted for predicting Parkinson's disease, and a discussion is given in Section \ref{sec:disc}.

\section{Preliminaries} \label{sec:model spec}

\subsection{Notation}
For any $a, b \in \bbR$, $a\vee b$ and $a \wedge b$ mean the maximum and minimum of $a$ and $b$, respectively.
For any sequences $a_n$ and $b_n$, we denote $a_n \lesssim b_n$, or equivalently $a_n =O(b_n)$, if there exists a constant $C>0$ such that $|a_n| \le C |b_n|$ for all large $n$.
We denote $a_n \ll b_n$, or equivalently $a_n =o(b_n)$, if $a_n / b_n \lra 0$ as $n\to\infty$.
Without loss of generality, if $a_n \ge b_n>0$ and there exist constants $C_1>C_2>0$ such that $C_2 < b_n/ a_n \le a_n / b_n < C_1$, we denote $a_n \sim b_n$.
For a given vector $v = (v_1,\ldots, v_p)^T \in \bbR^p$, the vector $\ell_2$-norm is denoted as $\|v\|_2 = ( \sum_{j=1}^p v_j^2 )^{1/2}$.
For any real symmetric matrix $A$, $\lambda_{\max}(A)$ and $\lambda_{\min}(A)$ are the maximum and minimum eigenvalue of $A$, respectively.
For a given matrix $X \in \bbR^{n \times p}$, let $X_S \in \bbR^{n\times |S|}$ denote the submatrix of $X$ containing the columns indexed by $S \subseteq [p] =: \{1,\ldots, p \}$, where $|S|$ is the cardinality of $S$.

\subsection{Model Specification}
We first describe the framework for Bayesian group selection followed by our hierarchical model specification. 
Let $E \in \{0,1\}^{n}$ be the binary response vector and $X \in \bbR^{n\times p}$ be the design matrix. 
Without loss of generality, we assume that the columns of $X$ are standardized to have zero mean and unit variance. 
Let $x_i \in \bbR^p$ denote the $i$th row vector of $X$ that contains the covariates for the $i$th subject. 
We consider a standard logistic regression model with $r$ groups:
\begin{align} \label{logistic_model}
P\left(E_i = 1 \mid x_i,\beta_{G_1},\ldots,\beta_{G_r} \right) = \frac{\exp\left\{\sum_{j=1}^r x_{iG_j}^T\beta_{G_j}\right\}}{1+\exp\left\{\sum_{j=1}^r x_{iG_j}^T\beta_{G_j}\right\}}, \quad i = 1,2, \ldots, n,
\end{align}
where, for $j = 1, 2, \ldots, r$, $\beta_{G_j} \in \bbR^{|G_j|}$ is a coefficients vector, $X_{G_j} \in \bbR^{n\times |G_j|}$ is the submatrix of $X$ corresponding to the $j$th group, and $x_{iG_j} \in \bbR^{|G_j|}$ is the $i$th row of matrix $X_{G_j}$ that contains the covariates within the $j$th group for the $i$th subject. 
Note that $x_i^T = \left(x_{iG_1}^T, x_{iG_2}^T, \ldots, x_{iG_r}^T\right)$ and $p = \sum_{j =1}^r|G_j|$.
We will work in a scenario where the dimension of predictors, $p$, as well as the number of groups, $r$, grows with the sample size $n$. 
Thus, we consider the number of predictors and groups are functions of $n$, that is, $p = p_n$ and $r= r_n$, but we denote them as $p$ and $r$ for notational simplicity.

Our goal is group selection, i.e., to correctly identify all the nonzero group coefficients. In light of that, for a given integer $1 \le j \le r$, we introduce binary latent variable $Z_j$, for the $j$th group coefficient vector $\beta_{G_j}$, which indicates whether the $j$th group covariates are active (i.e., having nonzero coefficients). We consider the following hierarchical group spike and slab priors in \cite{Yang:Naveen:2018} and \cite{Xu:Ghosh:2015} over the grouped regression coefficients, for $1 \le j \le r$,
\begin{eqnarray} \label{basad}
&\beta_{G_j} \mid Z_j \sim (1 - Z_j) \delta_0 + Z_j  N_{|G_j|}( 0, \tau^2I_{|G_j|}),    \label{model1}\\
&P(Z_j = 1) = 1 - P(Z_j = 0) = q_j, \label{model2}
\end{eqnarray}
for some $\tau>0$ and $0<q_j<1$, where $\delta_0$ is a point mass at zero.
Here, $Z_j= 1$ implies $\beta_j$ being the “signal” (i.e., from the slab component), and $Z_j = 0$ implies $\beta_{G_j}$ being the ``noise'' (i.e., from the spike component). 
Model \eqref{model2} imposes independent Bernoulli priors over $Z_j$ for $1 \le j \le r$. 
The proposed hierarchical model now has two hyperparameters: the scale parameter $\tau$ in model (\ref{model1}) controlling the variance of the slab part, the inclusion probability $q_j$ in model \eqref{model2} penalizing large models to encourage the true group coefficients to be sparse.
To obtain our desired consistency results, appropriate conditions for these hyperparameters will be introduced in Section \ref{sec:selection cons} (Condtion  \hyperref[cond_A4]{\rm (A4)}).

The intuition behind this set-up with latent variables is that a zero or very small group coefficient vector $\beta_{G_j}$ will be identified with $Z_j=0$, while an active coefficient vector will be classified as $Z_j = 1$. 
We use the posterior probabilities of all the $r$ latent variables to identify the active group coefficients.

\section{Main results}\label{sec:selection cons}

In this section, we show that the proposed Bayesian model enjoys desirable theoretical properties.
Let $t \subseteq [r]$ be the true model, which means that the nonzero locations of the true coefficient vector are $G_t = (G_j , j \in t)$.
Let $\beta_0 \in \bbR^p$ be the true coefficient vector and $\beta_{0, G_t} \in \bbR^{|G_t|}$ be the vector of the true nonzero coefficients.
For a given model $k \subseteq [r]$, we denote $L_n(\beta_{G_k})$ and $s_n(\beta_{G_k}) = \partial L_n(\beta_{G_k}) /(\partial \beta_{G_k})$ as the log likelihood and 
score function, respectively.
Furthermore,
\bea
H_n( \beta_{G_k}) &=& - \frac{\partial^2 L_n(\beta_{G_k}) }{\partial \beta_{G_k} \partial \beta_{G_k}^T} \,\,=\,\, \sum_{i=1}^n \sigma_i^2(\beta_{G_k}) x_{i G_k} x_{i G_k}^T \,\,=\,\, X_{G_k}^T \sg(\beta_{G_k}) X_{G_k}
\eea
as the negative Hessian of $L_n(\beta_{G_k})$, where $\sg(\beta_{G_k}) \equiv \sg_{G_k} = diag(\sigma_1^2(\beta_{G_k}),\ldots, \sigma_n^2(\beta_{G_k}))$, $\sigma_i^2(\beta_{G_k}) = \mu_i(\beta_{G_k})(1- \mu_i(\beta_{G_k}))$ and 
\bea
\mu_i(\beta_{G_k}) &=& \frac{\exp \{x_{i G_k}^T \beta_{G_k} \} }{1+ \exp \{x_{i G_k}^T \beta_{G_k} \}}.
\eea
In the rest of the paper, we denote $\sg = \sg (\beta_{G_t})$ and $\sigma_i^2 = \sigma_i^2(\beta_{G_t})$ for simplicity.

To attain desirable asymptotic properties of our posterior, we assume the following conditions: \\
\noindent{\bf Condition (A1)}\label{cond_A1} $r = r_n \lra \infty$ and $\log n \lesssim \log r =o(n)$ as $n\to \infty$.\\
\noindent{\bf Condition (A2)}\label{cond_A2} For some constant $C \in (0,\infty)$ and $0 \le d < (1+d)/2 \le d' \le 1$,
\bea
\max_{i,j} |x_{ij}| &\le& C , \\
0 < \lambda  \le \min_{k : |G_k| \le m_n + |G_t|} \lambda_{\min}\Big( n^{-1}H_n(\beta_{0, G_k}) \Big) 
&\le& \Lambda_{m_n + |G_t|} \le C^2 \Big( \frac{n}{\log r} \Big)^d,
\eea 
and $|G_t| \le   m_n$, where  $m_n = \big\{  (  n /\log r)^{\frac{1-d'}{2}}  \wedge p\big\}$ and $\Lambda_{\zeta} = \max_{k : |k| \le \zeta} \lambda_{\max} ( n^{-1}  X_{G_k}^T X_{G_k}  )$ for any integer $\zeta >0$. 
Furthermore, $\|\beta_{0,G_t}\|_2^2 = O\big( \tau^2 (\log r/n)^d \big)$.\\
\noindent{\bf Condition (A3)}\label{cond_A3} 
For some  constant $c_0>0$,
\bean\label{beta-min}
\min_{j \in t}  \|\beta_{0, G_j}\|_2^2 &\ge& c_0 \frac{|G_t| \Lambda_{|t|} \log r }{n} .
\eean
\noindent{\bf Condition (A4)}\label{cond_A4} For some small constant $\delta > 0$, the hyperparameters satisfy 
\bea
 \tau^2 \sim (1 \vee n^{-1} r^{2+2\delta} )  &\text{ and }&  q_j \equiv q \sim r^{-1}.
\eea

In Condition \hyperref[cond_A1]{\rm (A1)}, $\log n \lesssim \log r$ means that the number of groups, $r$, increases at a certain rate.
If $r \ge n^c$ for some small constant $c>0$, this condition is satisfied.
Condition \hyperref[cond_A1]{\rm (A1)} also allows $r$ to grow $\exp \{  o(n)\}$ as $n\to\infty$, which is the strongest result in the literature \citep{Yang:Naveen:2018}.

Condition \hyperref[cond_A2]{\rm (A2)} gives  lower and upper bounds of $n^{-1}H_n( \beta_{0,G_k} )$ and $n^{-1} X_{G_k}^T X_{G_k}$, respectively, where $k$ is a large model satisfying $|G_k| \le m_n + |G_t|$.
The lower bound condition can be seen as a restricted eigenvalue condition for $k$-sparse vectors and is satisfied with high probability for sub-Gaussian design matrices \citep{Naveen:2018}.
Similar conditions have been used in the linear regression literature \citep{ishwaran2005spike,yang2016computational,song2017nearly}. 
The upper bound condition is much weaker than or comparable to the (bounded) maximum eigenvalue conditions assumed in \cite{bondell2012consistent}, \cite{johnson2012bayesian}, \cite{narisetty2014bayesian} and \cite{CKG:2019}.

The quantity $m_n$ in Condition \hyperref[cond_A2]{\rm (A2)} is essentially an effective dimension of our model. 
We will restrict the prior on $Z$ to have the size smaller than or equal to $m_n$, which means that we focus on a set of {\it reasonably large} models.
The condition $|G_t|\le m_n$ implies that the true model is always included in this set.
Similar conditions have been commonly assumed in the sparse estimation literature \citep{Naveen:2018, LLL:2019}.

The last assumption in Condition \hyperref[cond_A2]{\rm (A2)} says that the magnitude of true signals is bounded above $\tau^2 (\log r /n)^d$ up to some constant.
In high-dimensional settings $r \ge n$, together with Condition \hyperref[cond_A4]{\rm (A4)}, the upper bound grows polynomially along with $r$, which allows the magnitude of signals to increase to infinity.
It is known that this upper bound condition can be avoided by using heavier tailed priors \citep{castillo2015bayesian} or using data-dependent priors with sub-Gaussian tail \citep{martin2017empirical} for the slab part in \eqref{model1}.
However, in this paper, we proceed with this upper bound condition to reduce computational burden.

Condition \hyperref[cond_A3]{\rm (A3)} gives a lower bound for nonzero signals, which is called the {\it beta-min} condition.
In general, this type of condition is necessary to not miss any nonzero signals. 
Together with Condition \hyperref[cond_A2]{\rm (A2)}, it allows the minimum signal to go to zero as $n\to\infty$.
We note here that we have $\log r$ term in the numerator of \eqref{beta-min}, which would be $\log p$ if the group structure is unknown.
Thus, it says that incorporating the group structure into the model indeed benefits  variable selection.

Condition \hyperref[cond_A4]{\rm (A4)} suggests appropriate conditions for hyperparameters.
In  high-dimensional settings $r \ge n$, it implies that the variance of the slab part, $\tau^2$, grows to infinity, which is a common assumption in the literature \citep{Naveen:2014,Naveen:2018,martin2017empirical}.
It also assumes that the prior inclusion probability for $G_j$, $q_j$, has the same rate with $r^{-1}$ up to some constant.
In general, when the group structure is unknown, it has been assumed that the prior inclusion probability for each variable is proportional to $p^{-c}$ for some constant $c\ge 1$ \citep{narisetty2014bayesian,castillo2015bayesian}.
In our case, the number of variables, $p$, is replaced with the number of groups, $r$.

\begin{theorem}[No super set]\label{thm:nosuper}
	Under Conditions \hyperref[cond_A1]{\rm (A1)}, \hyperref[cond_A2]{\rm (A2)} and \hyperref[cond_A4]{\rm (A4)}, 
	\bea
	\pi  \big( Z \supsetneq t \mid E  ,\, |G_Z| \le m_n \big) &\overset{P}{\lra}& 0  , \,\, \text{ as } n\to\infty  .
	\eea
\end{theorem}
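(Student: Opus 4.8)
\emph{Proof sketch.} For a model $k\subseteq[r]$ let $m_k(E)=\int\exp\{L_n(\beta_{G_k})\}\,d\Pi_k(\beta_{G_k})$ be the integrated likelihood, where $\Pi_k=\bigotimes_{j\in k}N_{|G_j|}(0,\tau^2 I_{|G_j|})$ is the slab prior induced by \eqref{model1}. Since $|G_t|\le m_n$ by Condition (A2), the normalising constant of the posterior restricted to $\{|G_Z|\le m_n\}$ is at least $\pi(Z=t)\,m_t(E)$, hence
\[
\pi\big(Z\supsetneq t\mid E,\ |G_Z|\le m_n\big)\ \le\ \sum_{k\supsetneq t,\ |G_k|\le m_n}\frac{\pi(Z=k)}{\pi(Z=t)}\cdot\frac{m_k(E)}{m_t(E)},
\]
and it suffices to prove this sum is $o_P(1)$. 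The prior ratio is immediate: for $k=t\cup s$ with $\emptyset\ne s\subseteq[r]\setminus t$, Condition (A4) gives $\pi(Z=k)/\pi(Z=t)=(q/(1-q))^{|s|}\lesssim (C_1/r)^{|s|}$; since there are at most $\binom{r}{|s|}\le r^{|s|}$ such $s$ of each size, this precisely offsets the model count, so all of the required decay must be supplied by the Bayes factor $m_k(E)/m_t(E)$.

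The core step is an upper bound on $m_k(E)/m_t(E)$ that is uniform over $k\supsetneq t$ with $|G_k|\le m_n$. Write $\beta^{*}=(\beta_{0,G_t},0)\in\bbR^{|G_k|}$ for the true coefficient vector embedded in model $k$, so $L_n(\beta^{*})=L_n(\beta_{0,G_t})$, and put $g_k=|G_k|-|G_t|=\sum_{j\in s}|G_j|\ge|s|$. On the high-probability event that the restricted maximisers $\what\beta_{G_k},\what\beta_{G_t}$ exist and lie within an $O(\sqrt{m_n\log r/n})$ ball of $\beta^{*}$ and $\beta_{0,G_t}$ — which follows from the restricted-eigenvalue lower bound $\lambda$ and $\max_{i,j}|x_{ij}|\le C$ in Condition (A2), with the contributions to $m_k(E),m_t(E)$ from outside these balls controlled via concavity of $L_n$ as in \cite{Naveen:2018} and \cite{narisetty2014bayesian} — a standard Laplace-type analysis (with the $G_t$-block contributions of the two integrals cancelling and the effective $g_k$-dimensional Hessian for the added block of $n^{-1}H_n(\beta_{0,G_k})$ having minimum eigenvalue at least $\lambda$ by Condition (A2)) yields
\[
\frac{m_k(E)}{m_t(E)}\ \lesssim\ C_2^{g_k}\,(\tau^2 n\lambda)^{-g_k/2}\,\exp\big\{L_n(\what\beta_{G_k})-L_n(\what\beta_{G_t})\big\},
\]
where $\tau^{-g_k}$ comes from the slab normalisation and $(n\lambda)^{-g_k/2}$ from the Hessian determinant. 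Because Condition (A4) forces $\tau^2 n\gtrsim r^{2+2\delta}$, the deterministic prefactor is $\lesssim C_3^{g_k}\,r^{-(1+\delta)g_k}$.

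The crux — and the step I expect to be the main obstacle — is a \emph{sharp} uniform bound on the likelihood increment,
\[
\max_{k\supsetneq t,\ |G_k|\le m_n}\Big\{L_n(\what\beta_{G_k})-L_n(\what\beta_{G_t})\Big\}\ \le\ (1+o_P(1))\,g_k\log r ,
\]
in which the leading constant \emph{must} be essentially $1$, since a constant exceeding $1+\delta$ would overwhelm the gain $r^{-(1+\delta)g_k}$ above and make the sum diverge — so the slack $\delta>0$ in Condition (A4) is precisely what is spent here. The plan is: (i) a second-order Taylor expansion of $L_n$ about $\beta^{*}$, which on the good neighborhood represents $2\{L_n(\what\beta_{G_k})-L_n(\what\beta_{G_t})\}$, up to a cubic remainder that is $o(g_k\log r)$ uniformly — it is $O\big(n\,\|\what\beta_{G_k}-\beta^{*}\|_1^3\big)$, negligible since $d'\ge(1+d)/2$ makes $m_n^3\log r=o(n)$ — as a quadratic form in the score $s_n(\beta^{*})$ along the $g_k$ directions associated with $G_s$; (ii) observing that this quadratic form is sub-exponential with $g_k$ degrees of freedom, a $\chi^2_{g_k}$-type variable, because $\sigma_i^2\le 1/4$ and $\max_{i,j}|x_{ij}|\le C$; and (iii) a union bound over the $\binom{r}{\ell}\le r^{\ell}$ models with $\ell=|s|$ added groups, using a Chernoff bound of the shape $\bbP(\chi^2_g\ge x)\le (x/g)^{g/2}e^{-(x-g)/2}$: the choice $x_g=2g\log r\,(1+o(1))$ makes $\sum_{\ell\ge1}\binom{r}{\ell}\max_{g\ge\ell}\bbP(\chi^2_g\ge x_g)=o(1)$ while keeping $x_g/2=g\log r\,(1+o(1))$, so that $\exp\{L_n(\what\beta_{G_k})-L_n(\what\beta_{G_t})\}\le r^{(1+o_P(1))g_k}$ simultaneously over all admissible $k$.

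Combining the three estimates, on the intersection of the high-probability events above,
\[
\sum_{k\supsetneq t,\ |G_k|\le m_n}\frac{\pi(Z=k)}{\pi(Z=t)}\cdot\frac{m_k(E)}{m_t(E)}\ \lesssim\ \sum_{\ell\ge1}\binom{r}{\ell}\Big(\frac{C_1}{r}\Big)^{\ell}\max_{g\ge\ell}\Big[C_3^{g}\,r^{-(1+\delta)g}\,r^{(1+o(1))g}\Big] .
\]
For $n$ large the bracket has base $C_3\,r^{-(\delta-o(1))}<1$ and so is decreasing in $g$; its maximum over $g\ge\ell$ is thus at $g=\ell$ and equals $(C_3 r^{-(\delta-o(1))})^{\ell}$, and multiplying by $\binom{r}{\ell}(C_1/r)^{\ell}\le C_1^{\ell}$ leaves a geometric series with ratio $\lesssim r^{-(\delta-o(1))}\to 0$. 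Hence the whole sum is $o_P(1)$, which proves $\pi(Z\supsetneq t\mid E,\ |G_Z|\le m_n)\overset{P}{\lra}0$. Apart from obtaining the leading constant $1$ in the increment bound, the remaining delicate points — existence and localisation of the restricted maximisers, and the control of $m_k(E),m_t(E)$ away from the good neighborhoods — are handled by concavity of $L_n$ together with Conditions (A1) and (A2), following \cite{Naveen:2018}.
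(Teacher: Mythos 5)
Your proposal follows essentially the same route as the paper's proof: bound the conditional posterior probability by $\sum_{k\supsetneq t}\{\pi(Z=k)/\pi(Z=t)\}\,m_k(E)/m_t(E)$, note that $q/(1-q)\sim r^{-1}$ offsets the model count, extract the factor $(n\tau^2\lambda)^{-(|G_k|-|G_t|)/2}\lesssim r^{-(1+\delta)(|G_k|-|G_t|)}$ from a Laplace-type bound on the Bayes factor (localizing $\what\beta_{G_k}$ via the restricted eigenvalue bound and handling the far region by concavity, exactly as the paper does following \cite{Naveen:2018}), and finish with a uniform bound on $L_n(\what\beta_{G_k})-L_n(\what\beta_{G_t})$ of order $(|G_k|-|G_t|)\log r$ with leading constant below $1+\delta$, summed as a geometric series. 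This is the paper's decomposition and its Lemmas S.1--S.3 (the determinant ratio, the likelihood-increment bound via a $\chi^2$-type deviation inequality with inflation $1+\delta^*$, and the union bound) in only slightly different clothing; your observation that $\delta>0$ in Condition (A4) is what pays for the increment bound is exactly right, although you do not need the constant to be essentially $1$ — the paper only needs $(1+\delta^*)(1+2w)<1+\delta$.

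Two of your specific claims, however, do not hold as stated. First, your step (i) controls the Taylor error additively by a cubic remainder $O\big(n\|\what\beta_{G_k}-\beta^*\|_1^3\big)$ and argues negligibility from $m_n^3\log r=o(n)$; since $\|\what\beta_{G_k}-\beta^*\|_2\lesssim\sqrt{|G_k|\Lambda_{|k|}\log r/n}$ with $|G_k|$ possibly of order $m_n$ and $\Lambda_{|k|}$ allowed to grow like $(n/\log r)^d$, this remainder need not be $o\big((|G_k|-|G_t|)\log r\big)$ uniformly (in particular when $|G_k|-|G_t|$ is small but $|G_t|\sim m_n$ and $d>0$). The paper avoids this by a multiplicative Hessian-sandwich argument (Lemmas A.1 and A.3 of \cite{Naveen:2018}), writing the increment as $\frac{1}{2(1-\epsilon)}\tilde U^T(P_{G_k}-P_{G_t})\tilde U+\frac{\epsilon}{1-\epsilon^2}\tilde U^TP_{G_t}\tilde U$ with $\epsilon\sim\sqrt{m_n^2\Lambda_{m_n}\log r/n}$, and the cross term is $O_p(1)$ precisely because $\epsilon m_n=O(1)$ — which is where $d'\ge(1+d)/2$ is actually spent; your sketch misattributes the role of that condition. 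Second, the $G_t$-block contributions in $m_k(E)$ and $m_t(E)$ do not simply cancel: the lower bound for $m_t(E)$ carries the prior-shrinkage factor $\exp\big\{-\tfrac12\what\beta_{G_t}^T\big(A_t-A_t(A_t+\tau^{-2}I)^{-1}A_t\big)\what\beta_{G_t}\big\}$, which must be shown to be bounded below; the paper's Lemma S.1 does this using the assumption $\|\beta_{0,G_t}\|_2^2=O\big(\tau^2(\log r/n)^d\big)$ in Condition (A2), an ingredient your sketch never invokes. Both points are repairable within your framework, but as written they are genuine gaps in the argument rather than routine bookkeeping.
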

In Theorem \ref{thm:nosuper} and following theorems, we consider the conditional posterior given the event of reasonably large models, $\{ Z: |G_Z| \le m_n\}$.
As stated before, this can be easily satisfied by restricting the prior \eqref{model2} on $\{ Z: |G_Z| \le m_n \}$.
Similar assumptions restricting the model size have been used in  \cite{liang2013bayesian,Naveen:2018} and \cite{LLL:2019}, to name a few.

Theorem \ref{thm:nosuper} says that, asymptotically, our posterior will not overfit the model, i.e., not include unnecessarily many variables.
Of course, it does not guarantee that the posterior will concentrate on the true model.
To capture every significant variable, we require the magnitudes of nonzero entries in $\beta_0$ not to be too small.
In general, if a nonzero entry in $\beta_0$ is too close to zero, it cannot be selected by any method.
Condition \hyperref[cond_A3]{\rm (A3)} gives a lower bound for the magnitudes to be detected.
Theorem \ref{thm:ratio} shows that with an appropriate lower bound condition, the true model $t$ will be the mode of the posterior.

\begin{theorem}[Posterior ratio consistency]\label{thm:ratio}
	Under Conditions \hyperref[cond_A1]{\rm (A1)}--\hyperref[cond_A4]{\rm (A4)} with $c_0 = \{(1-\epsilon_0)\lambda\}^{-1} \big\{  7 + 5 \{(1-\epsilon_0)\lambda\}^{-1}  \big\}$ for some small constant $\epsilon_0>0$,
	\bea
	\max_{k \neq t} \frac{\pi  \big( Z = k \mid E  ,\, |G_Z| \le m_n \big)}{\pi  \big( Z =t \mid E  ,\, |G_Z| \le m_n \big)} &\overset{P}{\lra}& 0  , \,\, \text{ as } n\to\infty .
	\eea
\end{theorem}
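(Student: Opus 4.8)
The plan is to reduce the posterior ratio to a prior ratio times a marginal–likelihood ratio and to estimate the latter model by model. Writing $m_k(E)=\int\exp\{L_n(\beta_{G_k})\}\prod_{j\in k}N_{|G_j|}(\beta_{G_j};0,\tau^2I_{|G_j|})\,d\beta_{G_k}$ for the marginal likelihood of model $k$ and $\hat\beta_{G_k}$ for the corresponding MLE, Bayes' rule gives
\[
\frac{\pi\big(Z=k\mid E,\,|G_Z|\le m_n\big)}{\pi\big(Z=t\mid E,\,|G_Z|\le m_n\big)}=\Big(\frac{q}{1-q}\Big)^{|k|-|t|}\frac{m_k(E)}{m_t(E)},
\]
and by Condition (A4) the prior factor is of order $r^{-(|k|-|t|)}$. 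I would split $\max_{k\neq t}$ into the regimes $k\supsetneq t$ and $t\not\subseteq k$; in the second regime I would factor through $k\cap t$,
\[
\frac{\pi(Z=k\mid E,\cdot)}{\pi(Z=t\mid E,\cdot)}=\frac{\pi(Z=k\mid E,\cdot)}{\pi(Z=k\cap t\mid E,\cdot)}\cdot\frac{\pi(Z=k\cap t\mid E,\cdot)}{\pi(Z=t\mid E,\cdot)},
\]
noting that $k$ is obtained from $k\cap t$ by adjoining only groups with zero true coefficients, while $b:=k\cap t\subsetneq t$ is a submodel of the truth missing the groups $S:=t\setminus k\neq\emptyset$. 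It then suffices to show that the superset maximum and both of these factors converge to $0$ in probability.

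For $k\supsetneq t$ I would reuse the estimate behind Theorem~\ref{thm:nosuper}: a Laplace-type upper bound for $m_k(E)$ together with a small-ball lower bound for $m_t(E)$ — both licensed by the restricted-eigenvalue and bounded-design parts of Condition (A2) — give
\[
\frac{m_k(E)}{m_t(E)}\lesssim e^{\,L_n(\hat\beta_{G_k})-L_n(\hat\beta_{G_t})}\,(cn\lambda\tau^2)^{-(|G_k|-|G_t|)/2}\,C^{\,|G_k|-|G_t|},
\]
a $\chi^2$-type concentration bound, uniform over supersets, controls the overfitting gain by $L_n(\hat\beta_{G_k})-L_n(\hat\beta_{G_t})\lesssim(|G_k|-|G_t|)+(|k|-|t|)\log r$, and by Condition (A4) the Occam factor $(n\tau^2)^{-(|G_k|-|G_t|)/2}$ is of order $r^{-(1+\delta)(|G_k|-|G_t|)}$; this beats the overfitting gain, the model count $r^{|k|-|t|}$ cancels the prior penalty $r^{-(|k|-|t|)}$, and summing the resulting geometric series over the adjoined groups gives a bound $\overset{P}{\lra}0$ (here the slack $\delta>0$ is what supplies the summability). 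The first factor in the non-superset case is controlled identically: since $k$ differs from $k\cap t$ only by groups with zero true coefficients, exactly the same Laplace/concentration/Occam estimate applies with $k\cap t$ in place of $t$.

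The core of the argument is the second factor, $\pi(Z=b\mid E,\cdot)/\pi(Z=t\mid E,\cdot)$ with $b\subsetneq t$. I would combine a Laplace upper bound for $m_b(E)$ with a small-ball lower bound for $m_t(E)$ centred at the truth — whose prior mass is bounded below because $\|\beta_{0,G_t}\|_2^2=O(\tau^2(\log r/n)^d)$ by Condition (A2) — and decompose
\[
L_n(\hat\beta_{G_b})-L_n(\beta_{0,G_t})=\big[L_n(\hat\beta_{G_b})-L_n(\hat\beta_{G_t})\big]+\big[L_n(\hat\beta_{G_t})-L_n(\beta_{0,G_t})\big].
\]
A second-order expansion of $L_n$ around $\hat\beta_{G_t}$, whose score vanishes over the coordinates of model $t$, bounds the first bracket by $-\tfrac{1}{2}(1-\epsilon_0)n\lambda\,\|(\hat\beta_{G_t})_{G_S}\|_2^2$; since $\hat\beta_{G_t}$ is the MLE in the \emph{true} model it is consistent at a rate governed by $|G_t|$, so $(\hat\beta_{G_t})_{G_S}$ is within $o(\|\beta_{0,G_S}\|_2)$ of $\beta_{0,G_S}$, and Condition (A3) then makes the first bracket at most $-\tfrac{1}{2}(1-\epsilon_0)\lambda c_0|S|\,|G_t|\,\Lambda_{|t|}\log r\,(1+o(1))$; the second bracket is the overfitting term of the fixed model $t$, which is $\lesssim|G_t|+|t|\log r$. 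Feeding these into the ratio, together with the prior factor $r^{|S|}$ and the Occam ``boost'' $(n\Lambda_{m_n+|G_t|}\tau^2)^{|G_S|/2}$ for the smaller model — and invoking Condition (A1) to convert $\Lambda_{m_n+|G_t|}\le C^2(n/\log r)^d$ into a bounded power of $r$ — the bound for $\pi(Z=b\mid E,\cdot)/\pi(Z=t\mid E,\cdot)$ becomes $r$ raised to an exponent of the form $(\text{prior}+\text{Occam}+\text{errors})-\tfrac{1}{2}(1-\epsilon_0)\lambda c_0|S|\,|G_t|\,\Lambda_{|t|}$, and the value $c_0=\{(1-\epsilon_0)\lambda\}^{-1}\{7+5\{(1-\epsilon_0)\lambda\}^{-1}\}$ is chosen precisely so that this exponent is negative, uniformly over the at most $2^{|t|}$ submodels $b$ of $t$. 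Combining the superset bound with the product of the two factors then yields $\max_{k\neq t}\overset{P}{\lra}0$.

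I expect the main obstacle to be the uniformity over the reasonably large model space $\{Z:|G_Z|\le m_n\}$: the consistency of the submodel MLEs, the Laplace approximations, the $\chi^2$-type overfitting bounds and the quadratic lower bound on the likelihood deficit all must hold simultaneously over this family, which injects a $\log r$ per adjoined group into each estimate. Reconciling those $\log r$ terms with the prior penalty $q\asymp r^{-1}$ and the Occam factor of order $r^{-(1+\delta)}$ per coefficient — and carrying the constants precisely enough to see why $c_0$ must be exactly this function of $\lambda$ and $\epsilon_0$ — is where the real work lies; the non-standard ingredient, relative to the linear-model treatments in \cite{Xu:Ghosh:2015} and \cite{Yang:Naveen:2018}, is that the deficit and the overfitting bounds now come from Taylor expansions of the logistic log-likelihood rather than from exact Gaussian computations.
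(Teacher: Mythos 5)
Your overall architecture (posterior ratio $=$ prior ratio $\times$ marginal likelihood ratio, Laplace upper bound for the smaller model, small-ball lower bound for $t$, Occam factor $(n\tau^2)^{-(|G_k|-|G_t|)/2}$ killed by Condition (A4), beta-min supplying the deficit for missing signals) matches the paper's, and your treatment of supersets and of the pure submodel comparison $b=k\cap t$ versus $t$ is essentially sound. The genuine gap is in your first factor for non-supersets, $\pi(Z=k\mid E,\cdot)/\pi(Z=k\cap t\mid E,\cdot)$, which you claim is "controlled identically" to the superset case because $k$ adjoins only noise groups to $b=k\cap t$. That identification fails: the $\chi^2$-type overfitting bound (Lemma A.2 of \cite{Naveen:2018}, used in the paper's Lemma \ref{lem_aux3}) controls quantities of the form $\tilde U^T(P_{G_k}-P_{G_b})\tilde U$ with $\tilde U=\sg^{-1/2}(E-\mu)$ centred at the \emph{true} mean, and the Taylor expansions there are taken around $\beta_{0,G_k}$, which lies in the model only when the base model contains $t$. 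Since $b\subsetneq t$ omits true signal groups, the model is misspecified: if an adjoined noise group is correlated with the omitted signal (e.g.\ the compound-symmetry design of Section \ref{sec:sim}), the likelihood gain $L_n(\what\beta_{G_k})-L_n(\what\beta_{G_b})$ can be of order $n$ times the recaptured signal energy, not $O((|G_k|-|G_b|)\log r)$, so your first factor need not be small and your second factor's constant budget (and hence the specific value of $c_0$) no longer demonstrably covers the product. You would have to either re-derive an overfitting bound around the pseudo-true parameter of $b$ and track how much of $\|\beta_{0,G_{t\setminus k}}\|_2^2$ the adjoined groups can absorb, which is new work your sketch does not supply, or restructure the argument.

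The paper's route avoids the misspecified comparison entirely: for $k\not\supseteq t$ it sets $k^*=k\cup t$, pads $\beta_{G_k}$ with zeros on $G_{t\setminus k}$ to get $\beta_{G_{k^*}}$, and expands $L_n(\beta_{G_{k^*}})$ around $\what\beta_{G_{k^*}}$, the MLE of a model that \emph{does} contain the truth. One single expansion then yields both ingredients at once: the only likelihood-difference comparison ever needed is the superset-versus-truth one, $L_n(\what\beta_{G_{k^*}})-L_n(\what\beta_{G_t})\lesssim C_*\log r\,(|G_{k^*}|-|G_t|)$ (legitimately covered by Lemma \ref{lem_aux3}), and evaluating the quadratic form at the zero-padded coordinates produces the deficit factor $\exp\{-\tfrac{n(1-\epsilon)\lambda}{2}\|\what\beta_{G_{t\setminus k}}\|_2^2\}$, where $\what\beta_{G_{t\setminus k}}$ is the corresponding block of the joint-fit MLE and is close to $\beta_{0,G_{t\setminus k}}$ by Lemma A.3 of \cite{Naveen:2018}; Condition (A3) then gives the $\exp\{-c|t\setminus k|^2|G_t|\Lambda_{|t|}\log r\}$ decay that beats the prior, Occam and overfitting terms for the stated $c_0$. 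If you want to keep your telescoping decomposition through $k\cap t$, you must close the misspecification gap in the first factor; otherwise, adopt the $k^*$-embedding, which is precisely the device that makes the non-superset case reduce to bounds you have already established.
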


Posterior ratio consistency is a useful property especially when we are interested in the point estimation with the posterior mode.
However, it does not give a suitable answer to how large probability the posterior puts on the true model.
In the following theorem, we state that our posterior achieves {\it strong selection consistency}, which resolves the question.
By strong selection consistency, we mean that the posterior probability assigned to the true model $t$ grows to 1 as we observe more data.
One can see that strong selection consistency implies posterior ratio consistency, but not vice versa.
Theorem \ref{thm:selection} says that, to achieve strong selection consistency, a slightly larger lower bound for the magnitudes of nonzero entries in $\beta_0$ is required compared to that in Theorem \ref{thm:ratio}.
\begin{theorem}[Strong selection consistency]\label{thm:selection}
	Under Conditions \hyperref[cond_A1]{\rm (A1)}--\hyperref[cond_A4]{\rm (A4)} with $c_0 = \{(1-\epsilon_0)\lambda\}^{-1}  \big\{  17 + 5\{(1-\epsilon_0)\lambda\}^{-1}  \big\}$ for some small constant $\epsilon_0>0$,
	\bea
	\pi  \big( Z =t \mid E  ,\, |G_Z| \le m_n \big) &\overset{P}{\lra}& 1  , \,\, \text{ as } n\to\infty  .
	\eea
\end{theorem}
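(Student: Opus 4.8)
The plan is to reduce the assertion to the vanishing of a sum of posterior ratios, and then to bound that sum by partitioning the model space according to the symmetric difference of a candidate model $k$ with the truth $t$. Write $PR_k := \pi(Z=k\mid E,\,|G_Z|\le m_n)/\pi(Z=t\mid E,\,|G_Z|\le m_n)$. Since
\[
\pi\big(Z=t\mid E,\,|G_Z|\le m_n\big)^{-1}\;=\;1+\sum_{k\neq t,\;|G_k|\le m_n}PR_k ,
\]
the theorem is equivalent to $\sum_{k\neq t,\,|G_k|\le m_n}PR_k\overset{P}{\lra}0$. Theorem \ref{thm:ratio} already yields $\max_{k\neq t}PR_k\overset{P}{\lra}0$, so the genuinely new ingredient is to bound each $PR_k$ by a quantity whose decay in $|k\triangle t|$ beats the number of competing models of that size, which is at most $\binom{r}{|k\triangle t|}\le r^{|k\triangle t|}$.

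For the per-model bound I would factor $PR_k=\big(q/(1-q)\big)^{|k|-|t|}\cdot m(E\mid Z=k)/m(E\mid Z=t)$, where $m(E\mid Z=k)=\int p(E\mid\beta_{G_k})\,\pi(\beta_{G_k}\mid Z_k=1)\,d\beta_{G_k}$ and, by Condition \hyperref[cond_A4]{\rm (A4)}, the prior factor is $\sim r^{-(|k|-|t|)}$. The marginal-likelihood ratio I would control on a high-probability event $\calE_n$, with $\bbP(\calE_n)\to1$, via a Laplace-type expansion of $L_n$ around the restricted maximum likelihood estimator $\hat\beta_{G_k}$, using the eigenvalue bounds of Condition \hyperref[cond_A2]{\rm (A2)} to sandwich $H_n(\hat\beta_{G_k})$ between $n\lambda I$ and $n\Lambda_{m_n+|G_t|}I$ on $\calE_n$; this is the apparatus already built for Theorems \ref{thm:nosuper} and \ref{thm:ratio}, which I would reuse as far as possible. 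On $\calE_n$ the expansion gives, for every $k$ with $|G_k|\le m_n$, a bound of the schematic form
\[
PR_k\;\lesssim\;\exp\Big\{-(1+\delta)\log r\,\big(|G_k|-|G_t|\big)-\log r\,\big(|k|-|t|\big)+\Delta_k-\gamma_k\Big\},
\]
where $\Delta_k=L_n(\hat\beta_{G_k})-L_n(\hat\beta_{G_t})\ge0$ is the (chi-square-type) log-likelihood gain when $k\supseteq t$, $\gamma_k\ge0$ is the deterministic log-likelihood deficit forced when $k$ misses a true group, the per-added-coordinate factor is $\tfrac12\log(\tau^2 n)\gtrsim(1+\delta)\log r$ by Condition \hyperref[cond_A4]{\rm (A4)}, and lower-order terms from the Laplace expansion and the eigenvalue bounds have been suppressed.

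The sum then splits into supersets $k\supsetneq t$ and models $k\not\supseteq t$. For a superset, $\gamma_k=0$; since each added group contributes at least one coordinate, $|G_k|-|G_t|\ge|k|-|t|$, so the deterministic part of the exponent is $\le-(2+\delta)(|k|-|t|)\log r$. The fluctuation $\Delta_k$ is sub-exponential with $O(|G_k|-|G_t|)$ degrees of freedom, and a union bound over the $\le r^{|k|-|t|}$ supersets with $|k|-|t|$ fixed, together with chi-square tail inequalities, shows $\Delta_k\le(1+\epsilon)(|k|-|t|)\log r$ simultaneously over all of them, for an arbitrarily small $\epsilon>0$ (this is precisely the strengthening of Theorem \ref{thm:nosuper} needed to pass from a single model to a sum). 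Taking $\epsilon<\delta$, the contribution of the supersets with $|k|-|t|=\ell$ is then $\lesssim r^{\ell}\cdot r^{-(1+\delta-\epsilon)\ell}=r^{-(\delta-\epsilon)\ell}$, which is summable, so $\sum_{k\supsetneq t}PR_k\overset{P}{\lra}0$. For $k\not\supseteq t$, the restricted-eigenvalue bound in Condition \hyperref[cond_A2]{\rm (A2)} and the beta-min Condition \hyperref[cond_A3]{\rm (A3)} force $\gamma_k\gtrsim(1-\epsilon_0)\lambda\,c_0\,(\log r)\,|t\setminus k|$, so each missed true group contributes a factor $r^{-\kappa c_0\lambda}$ for an absolute constant $\kappa>0$, while any extra noise groups are handled exactly as in the superset case; summing over the numbers of missed and added groups and using $\log m_n\lesssim\log r$ (Condition \hyperref[cond_A1]{\rm (A1)}, together with $m_n\le(n/\log r)^{1/2}$), the enlarged value $c_0=\{(1-\epsilon_0)\lambda\}^{-1}\{17+5\{(1-\epsilon_0)\lambda\}^{-1}\}$ is calibrated so that $\kappa c_0\lambda$ strictly exceeds the growth exponent $\log m_n/\log r$ of the number of ways to miss groups, forcing that geometric series, and hence the whole sum, to zero. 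Combining the two regimes gives $\sum_{k\neq t,\,|G_k|\le m_n}PR_k\overset{P}{\lra}0$, and hence the theorem.

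The step I expect to be the main obstacle is the underfitted regime. One must convert the pointwise beta-min bound $\min_{j\in t}\|\beta_{0,G_j}\|_2^2\ge c_0|G_t|\Lambda_{|t|}\log r/n$ into a lower bound on $\gamma_k$ that is uniform over all $k\not\supseteq t$ and of order $c_0\lambda(\log r)|t\setminus k|$; because the logistic log-likelihood lacks an exact quadratic structure, this requires a second-order expansion with the curvature bounded below via Condition \hyperref[cond_A2]{\rm (A2)} on a neighborhood of $\beta_{0,G_t}$, and the stochastic cross terms in that expansion must be shown to be of smaller order than $\gamma_k$. It is exactly this constant-tracking — demanding that the net per-missed-group decay rate strictly beat $\log m_n/\log r$ rather than merely be positive — that forces $c_0$ above the value $\{(1-\epsilon_0)\lambda\}^{-1}\{7+5\{(1-\epsilon_0)\lambda\}^{-1}\}$ that suffices for Theorem \ref{thm:ratio}.
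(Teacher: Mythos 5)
Your plan is essentially the paper's own route: reduce to $\sum_{k\neq t}PR(k,t)\overset{P}{\to}0$, split into supersets and non-supersets, bound each marginal-likelihood ratio by a Laplace-type expansion with the eigenvalue sandwich of Condition (A2), control the likelihood gain over supersets by chi-square-type tails (the paper's Theorem \ref{thm:nosuper} already proves the superset \emph{sum} vanishes, via $b_n=(1+\delta^*)(1+2w)\log r$), and for $k\not\supseteq t$ use the beta-min condition to force an exponential deficit that beats the $\binom{|t|}{v}\binom{r-|t|}{|k|-v}$ counting, with the enlarged $c_0$ absorbing the extra counting cost relative to Theorem \ref{thm:ratio}. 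Two calibration details in your sketch are stated imprecisely but are harmless: the simultaneous fluctuation bound must be per added \emph{coordinate}, $\Delta_k\lesssim(1+\delta^*)(1+2w)(|G_k|-|G_t|)\log r$ (group sizes are not bounded, so $(1+\epsilon)(|k|-|t|)\log r$ can fail), which your own $(1+\delta)\log r$-per-coordinate penalty from $n\tau^2\sim r^{2+2\delta}$ absorbs; and the quantity the per-missed-group decay must beat is the $O(\log r)$-per-missed-group counting term (plus the $r^{2}$-type factor from summing over model sizes), not $\log m_n/\log r$ — this is exactly what moves the constant from $7$ to $17$ in the paper.
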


\section{Posterior Computation for Group Selection}\label{sec:post_comp}
The logistic model \eqref{logistic_model} is equivalent to letting 
\begin{equation} \label{indicator_func}
E_i = \mathbbm{1}(Y_i \ge 0),
\end{equation}
where $\mathbbm 1(\cdot)$ is the indicator function and $Y_i$ is an underlying continuous variable that has a univariate logistic density with
\begin{equation} \label{logistic_density}
\pi(Y_i \mid  x_i,\beta_{G_1},\ldots,\beta_{G_r}) = \frac{\exp\left\{-\left(Y_i - \sum_{j=1}^r x_{iG_j}^T\beta_j\right)\right\}}{\left[1+\exp\left\{-\left(Y_i - \sum_{j=1}^r x_{iG_j}^T\beta_{G_j}\right)\right\}\right]^2}.
\end{equation} 
As noted in \cite{Albert:Chib:1993,Brien:Dunson:2004} and \cite{Naveen:2018}, an excellent approximation to the univariate logistic density \eqref{logistic_density} can be obtained using the following $t$-distribution with $\nu$ degrees of freedom and scale parameter $\sigma_0^2$,
\begin{equation*}
\pi_\nu(Y_i \mid x_i, \beta_{G_1},\ldots,\beta_{G_r}) = \frac{\Gamma((\nu+1)/2)}{\Gamma(\nu/2)\sqrt{\nu\pi}\sigma_0}\left\{1 + \frac{\left(Y_i - \sum_{j=1}^r x_{iG_j}^T\beta_{G_j}\right)^2}{\nu\sigma_0^2}\right\}^{-(\nu+1)/2}.
\end{equation*}
Following \cite{Albert:Chib:1993,Brien:Dunson:2004} and \cite{Naveen:2018}, by setting $\sigma_0^2 = \pi^2(\nu - 2)/3\nu$ and $\nu = 7.3$, the resulting $t$-distribution is nearly indistinguishable from the logistic distribution in \eqref{logistic_density}. As we shall demonstrate subsequently, one can exploit this approximation to develop efficient algorithms for posterior computation instead of directly sampling from the logistic distribution. In particular, the $t$-distribution can be equivalently represented as a mixture of normal distributions as follows, for $1 \le i \le n$,
\begin{equation} \label{scale_normal}
Y_i \sim N(\sum_{j=1}^r x_{iG_j}^T\beta_{G_j}, s_i^2), \quad 1/s_i^2 \sim \mbox{Gamma}\left(\nu/2, \sigma_0^2\nu/2\right).
\end{equation}
Let $W= \mbox{diag}\left\{s_1^{-2}, s_2^{-2}, \ldots, s_n^{-2}\right\}$. 
By \eqref{basad}, \eqref{indicator_func} and \eqref{scale_normal}, we have the joint posterior for $(\beta, Y, Z, W)$ given by 
\begin{align} \label{joint_posterior}
\pi(\beta,  Y, Z, W \mid E) \propto& \prod_{i = 1}^{n} N \big( Y_i \mid \sum_{j=1}^r x_{iG_j}^T\beta_{G_j}, s_i^2 \big) \mathbbm 1 \big\{ E_i = \mathbbm{1}\left(Y_i \ge 0\right) \big\} \pi(s_i^2) \nonumber\\
& \times \prod_{j = 1}^r \pi(Z_j)\left\{ (1 - Z_j) \delta_0\left(\beta_{G_j}\right) + Z_jN_{|G_j|}(0, \tau^2I_{|G_j|}) \right\}.
\end{align}
It follows that
\begin{align}
\pi(\beta,  Y, Z, W \mid E)
\propto& \exp\left\{-(Y -X\beta)^TW(Y - X\beta)/2\right\}\prod_{i = 1}^n \mathbbm 1 \big\{  E_i = \mathbbm{1}\left(Y_i \ge 0\right) \big\} \nonumber\\
&\times \prod_{i = 1}^n (s_i^2)^{-1/2 - \nu/2 - 1} \exp \left\{-\sigma_0^2\nu/ (2s_i^2)\right\} \nonumber\\
&\times \prod_{j: Z_j=1}(2\pi\tau^2)^{-|G_j|/2}\exp\left\{-\beta_{G_j}^T\beta_{G_j}/(2\tau^2)\right\}\prod_{j: Z_j=0}\mathbbm 1 \left(\beta_{G_j} = 0\right) \nonumber\\
&\times \left\{ q_j/(1-q_j) \right\}^{\sum_{j=1}^r Z_j}.
\end{align}

\subsection{Gibbs Sampler}
For posterior computation, we first consider the standard approach to draw samplers from $\pi(\beta,  Y, Z, W|E)$ using the following posterior distributions. For $1 \le j \le r$, let $\beta_{-G_j}$ denote the $\beta$ vector without the $j$th group. Let $X_{-G_j}$ denote the covariate matrix corresponding to $\beta_{-G_j}$. 
\begin{itemize}
	\item For  $1 \le i \le n$, generate $Y_i$ via the following conditional distribution,
	\begin{equation} \label{gibbs_Y}
	\pi(Y_i \mid \mbox{rest}) \propto  \begin{cases}
	N(Y_i \mid x_i^T \beta, s_i^2) \mathbbm{1}\left(Y_i > 0\right), \quad\mbox{if } E_i = 1,\\
	N(Y_i \mid x_i^T \beta, s_i^2) \mathbbm{1}\left(Y_i < 0\right), \quad\mbox{if } E_i = 0 .
	\end{cases} 
	\end{equation}
	\item For  $1 \le i \le n$, generate $s_i $ as follows,
	\begin{equation} \label{gibbs_s}
	s_i^2 \mid \mbox{rest} \sim \mbox{Inverse-Gamma} \left(1/2 + \nu/2, (Y_i - x_i^T\beta)^2/2 + \sigma_0^2\nu/2\right) 
	\end{equation}
	\item For $1 \le j \le r$, let $\Sigma_{G_j} = \left(X_{G_j}^T W X_{G_j} + I_{|G_j|}/\tau^2\right)^{-1}$ and $\mu_{G_j} = \Sigma_{G_j} X_{G_j}^TW (Y - X_{-G_j}\beta_{-G_j})$. 
	The conditional distribution of $Z_j$ is given by
	\begin{align} \label{gibbs_gamma}
	Z_j \mid Y, W, Z_{-j}, \beta_{-G_j} \sim \mbox{Bernoulli}\left(1 - \left(1 + \frac {q_j} {1-q_j}(\tau^2)^{-|G_j|/2}|\Sigma_{G_j}|^{1/2}\exp\left\{1/2\mu_{G_j}^T\Sigma_{G_j}^{-1}\mu_{G_j}\right\}\right)^{-1}\right).
	\end{align}	
	\item For $1 \le j \le r$, generate $\beta_{G_j}$ based on the following spike and slab distribution,
	\begin{align} \label{gibbs_beta}
	\beta_{G_j} \mid \mbox{rest} \sim (1 - Z_j)\delta_0  + Z_jN_{|G_j|}(\mu_{G_j}, \Sigma_{G_j}).
	\end{align}
\end{itemize}
Note that when sampling $Z_j$, we are using the conditional posterior after integrating out $\beta_{G_j}$ rather than using the full conditional posterior.
The same trick was used in \cite{Yang:Naveen:2018} and \cite{Xu:Ghosh:2015} to ensure that the Markov chain will be irreducible and converge. 
Otherwise, reversible-jump Markov chain Monte Carlo \citep{Green:1995} will be needed, which might be impractical under high-dimensional settings. 
To estimate the highest posterior probability model, we record the model selected at each simulation and tabulate them to find the model that appears most often.

\subsection{Neuronized Reparameterization of Spike and Slab Prior}

In this section, we suggest an alternative MCMC algorithm using the neuronized prior \citep{Shin:Liu:2018}.
\citet{Shin:Liu:2018} proposed neuronized priors that provide a unified form of shrinkage priors including continuous shrinkage priors, continuous and discrete spike and slab priors. 
Compared with classic spike and slab priors, the neuronized priors achieve the same explicit variable selection without employing any latent indicator variable, which results in more efficient MCMC algorithms. 
In the form of neuronized priors, each regression coefficient is reparameterized as a product of a weight parameter and a transformed scale parameter via an activation function. 
Adapted to our group selection setting, we define the neuronized prior as follows, for $1 \le j \le r$,
\begin{equation} \label{neuronized}
\beta_{G_j}(\alpha_j, w_{|G_j|}) := \mathbbm{1}\left(\alpha_j - \alpha_0 \ge 0\right) w_{G_j},
\end{equation} 
where the scale parameter $\alpha_j$ follows the standard normal distribution and the weight vector $w_{G_j}$ follows $N_{|G_j|}\left(0, \tau^2\right)$. Note that in accordance with our model with latent variable, $1 - q = P(Z_j = 0) = P(\beta_{G_j} = 0) = P(\alpha_j < \alpha_0) = \Phi(\alpha_0)$. Therefore, we set $\alpha_0 = \Phi^{-1}(1-q)$. 

Under the neuronized reparameterization, the posterior distribution of $\alpha$ and $w$ is given by
\begin{align}
\pi(\alpha, w \mid \mbox{rest}) \propto \exp\left\{-(Y-X\beta(\alpha,w))^TW(Y-X\beta(\alpha,w))/2 - \alpha^T\alpha/2 - w^Tw/2\tau^2\right\},
\end{align}
where $\alpha = \left(\alpha_1, \ldots, \alpha_r\right)^T$, $w = \left(w_{G_1}^T, \ldots, w_{G_r}^T\right)^T$ and $\beta(\alpha,w) = D_\alpha w$ with $$D_\alpha = \mbox{diag}\{\underbrace{\mathbbm{1}\left(\alpha_1 - \alpha_0 \ge 0\right),\ldots, \mathbbm{1}\left(\alpha_1 - \alpha_0 \ge 0\right)}_{|G_1|}, \ldots, \underbrace{\mathbbm{1}\left(\alpha_r - \alpha_0 \ge 0\right),\ldots, \mathbbm{1}\left(\alpha_r - \alpha_0 \ge 0\right)}_{|G_r|}\}.$$
Therefore, the conditional posterior distribution of $w$ given the rest is multivariate normal expressed as
\begin{equation} \label{gibbs_w}
w \mid \mbox{rest} \sim N((D_\alpha X^TWXD_\alpha + I/\tau^2)^{-1}D_\alpha X^TWY, (D_\alpha X^TWXD_\alpha + I/\tau^2)^{-1}).
\end{equation}
As illustrated in \cite{Shin:Liu:2018}, if directly sampling $w$ from its conditional posterior distributions in high dimensions, the numerical calculation will involve computing the inverse matrix which is very expensive. Hence, we adopt the fast sampling procedure \citep{Bhattacharya:2016} to reduce the computational complexity. Before presenting the MCMC algorithm for the neuronized version, we show below that the conditional distribution $\pi(\alpha_j|\mbox{rest})$ is a mixture of two truncated normals, which can be sampled directly. Note that by \eqref{neuronized}, for $1 \le j \le r$, the conditional posterior of $\alpha_j$ given the rest is
\begin{align*}
&\pi(\alpha_j \mid \mbox{rest}) \\
\propto& (2\pi)^{-1/2}\exp\left\{- (r_j - X_{G_j}\mathbbm{1}\left(\alpha_j - \alpha_0 \ge 0\right)w_{G_j})^TW(r_j - X_{G_j}\mathbbm{1}\left(\alpha_j - \alpha_0 \ge 0\right)w_{G_j})/2- \alpha_j^2/2\right\},
\end{align*}
where $r_{j} = Y - X\beta(\alpha, w) + X_{G_j}\mathbbm{1}\left(\alpha_j - \alpha_0 \ge 0\right)w_{G_j}$. It follows that
\begin{equation}
\pi(\alpha_j \mid \mbox{rest}) \propto  \begin{cases}
(2\pi)^{-1/2}\exp\left\{- (r_j - X_{G_j}w_{G_j})^TW(r_j - X_{G_j}w_{G_j})/2- \alpha_j^2/2\right\}, \quad\mbox{if } \alpha_j \ge \alpha_0,\\
(2\pi)^{-1/2}\exp\left\{- r_j^TWr_j/2- \alpha_j^2/2\right\}, \quad\mbox{if } \alpha_j < \alpha_0.
\end{cases} 
\end{equation}
Therefore, 
\begin{equation} \label{gibbs_alpha}
\alpha_j \mid \mbox{rest} \sim \kappa N_{tr}(0, 1; -\infty, \alpha_0) + (1-\kappa)N_{tr}(0, 1; \alpha_0, \infty),
\end{equation}
where $$\kappa = \frac{\Phi(\alpha_0)\exp\left\{- r_j^TWr_j/2\right\}}{\Phi(\alpha_0)\exp\left\{- r_j^TWr_j/2\right\} + (1-\Phi(\alpha_0))\exp\left\{- (r_j - X_{G_j}w_{G_j})^TW(r_j - X_{G_j}w_{G_j})/2\right\}}.$$
Now we are ready to present the following MCMC algorithm under the neuronized reparameterization. For each iteration,
\begin{itemize}
	\item Sample $w$ from \eqref{gibbs_w}.
	\item Update $r_e = Y - X\beta(\alpha, w).$
	\item[] For $j = 1, 2, \ldots, r$
	\begin{itemize}
		\item Set $r_j = r_e + X_{G_j}\mathbbm{1}\left(\alpha_j - \alpha_0 \ge 0\right)w_{G_j}$.
		\item Sample $\alpha_j$ from \eqref{gibbs_alpha}.
		\item Update $r_e = r_j - X_{G_j}\mathbbm{1}\left(\alpha_j - \alpha_0 \ge 0\right)w_{G_j}$.
	\end{itemize}
	\item[] For $i = 1, 2, \ldots, n$
	\begin{itemize}
		\item Sample $Y_i$ and $s_i$ from \eqref{gibbs_Y} and \eqref{gibbs_s} respectively.
	\end{itemize}
\end{itemize}

\section{Simulation Studies}\label{sec:sim}
In this section, we demonstrate the performance of the proposed method in various settings. We closely follow but slightly modify the simulation settings in \cite{Breheny:2015, Naveen:2018} and \cite{Yang:Naveen:2018}. Let $X$ denote the design matrix whose first $|t|$ groups of columns correspond to the active covariates for which we have nonzero coefficients, while the rest correspond to the inactive ones with zero coefficients. 
In all the simulation settings, we generate $x_i \overset{i.i.d.}{\sim} N_p(0, \Sigma )$ for $i=1,\ldots,n$ under the following three different cases of $\Sigma$.
\begin{enumerate} 
	\item Case 1: Isotropic design, where $\Sigma = I_p$. There is no correlation either within or between different groups. 
	\item Case 2: Compound symmetry design, where $\Sigma_{ij} = 0.5$, if $i \neq j$ and $\Sigma_{ii} = 1,$ for all $1 \le i \le j \le p.$ The correlations at both the group and within group levels are equally 0.5.
	\item Case 3: Autoregressive correlated design; where $\Sigma_{ij} = 0.5^{|i-j|},$ for all $1 \le i \le j \le p$. The correlations at both the group and within group levels are set to different values.
\end{enumerate}
For given $X$ and $1\le i \le n$, we sample $E_i $ from the following logistic model as in \eqref{logistic_model}
\bea
P\left(E_i = 1 \mid x_i\right) = \frac{\exp\left\{\sum_{j=1}^r x_{iG_j}^T\beta_{G_j}\right\}}{1+\exp\left\{\sum_{j=1}^r x_{iG_j}^T\beta_{G_j}\right\}}  .
\eea
Following the simulation settings in \cite{Breheny:2015} and \cite{Yang:Naveen:2018}, we consider the following three designs, each with the same sample size $n = 100$ and number of groups being either $r=50$ or $r=100$. The group size is taken to be one of 4, 5 or 6 randomly with equal probability.
\begin{enumerate}
	\item Design 1 (Baseline design): The number of groups $r = 50$ and $|t| = 3.$
	\item Design 2 (Dense model design):  The number of groups $r = 50$ and $|t| = 6.$
	\item Design 3 (High dimensional design): The number of groups $r = 100$ and $|t| = 3.$
\end{enumerate}
We investigate the following four settings for the true coefficient vector $\beta_0$ to include different combinations of small and large signals.
\begin{enumerate}
	\item Setting 1: All the entries of $\beta_{0, G_t}$ are generated from $\mbox{Unif}(0.5,1.5)$.
	\item Setting 2: All the entries of $\beta_{0, G_t}$ are set to 1.5.
	\item Setting 3: All the entries of $\beta_{0, G_t}$ are generated from $\mbox{Unif}(1.5, 3)$.
	\item Setting 4: All the entries of $\beta_{0, G_t}$ are set to 3.
\end{enumerate}
We will refer to our proposed method as Group Spike and Slab (GSS) and the estimates computed from standard Gibbs sampling algorithms and neuronized version as GSS-G and GSS-N, respectively. The performance of GSS will then be compared with other existing methods including group lasso (grLasso), group SCAD (grSCAD), group MCP (grMCP), and group exponential Lasso (gel). 
The tuning parameters in the regularization approaches are chosen by 10-fold cross-validation.
For GSS, the hyperparameters are set at $\tau^2 = \max\{1, 0.01n^{-1}r^{2+2\times0.01}\}$ and $q = 1/r$.
The initial state for $\gamma$ was set by randomly taking three groups to be active and the remaining to be inactive.
For posterior inference, $2,000$ posterior samples were drawn with a burn-in period of $2,000$.
The indices having posterior inclusion probability larger than $0.5$ were included in the final model.
The resulting model is called the median probability model, and when there is a model with posterior probability larger than $1/2$, it coincides with the posterior mode \citep{barbieri2004optimal}.

To evaluate the performance of variable selection, the sensitivity, specificity, Matthews correlation coefficient (MCC), the number of errors (\#Error) and mean-squared prediction error (MSPE) are reported at Tables \ref{table:comp1} to \ref{table:comp9}, where each simulation setting is repeated for 50 times. 
The criteria are defined as
\bea
\text{Sensitivitiy}  &=&     \frac{TP}{TP+FN} ,   \\
\text{Specificity}  &=&    \frac{TN}{TN+FP}   ,  \\
\text{MCC}  &=&     \frac{TP \times TN - FP\times FN}{\sqrt{(TP+FP)(TP+FN)(TN+FP)(TN+FN)}}  ,	  \\
\text{MSPE}  &=&    \frac{1}{n_{\rm test}} \sum_{i=1}^{n_{\rm test}}     \big( \hat{Y}_i - Y_{{\rm test},i}  \big)^2  ,
\eea
where TP, TN, FP and FN are true positive, true negative, false positive and false negative, respectively.
Here we denote $\hat{Y}_i = x_i^T \hat{\beta}$, where $\hat{\beta}$ is the estimated coefficient based on each method.
For Bayesian methods, the usual GLM estimates based on the selected support are used as $\hat{\beta}$.
We generated test samples $Y_{{\rm test}, 1},\ldots, Y_{{\rm test}, n_{\rm test}}$ with $n_{\rm test}=100$ to calculate the MSPE.

\begin{table}[!tb]
	\centering\footnotesize
	\caption{
		The summary statistics for Design 1 (Baseline design) are represented for each setting of the true regression coefficients under the first isotropic case ($\Sigma = I_p$).
		Different setting means different choice of the true coefficient $\beta_0$.}
	\vspace{.15cm}
	\begin{tabular}{c c c c c |   c c c c}
		\hline 
		& \multicolumn{4}{c}{ Setting 1 } & \multicolumn{4}{c}{ Setting 2 } \\ 
		& Sensitivity & Specificity & MCC & MSPE & Sensitivity & Specificity & MCC & MSPE \\ \hline
		GGS-N &1 &1 &1 &0.16 &1 &1 &1 &0.13   \\ 
		GGS-G  &0.97 &1 &0.98 &0.17&1 &1 &1 &0.12 \\ 
		grLasso  &1 &0.53 &0.25 &0.18 &1 &0.68 &0.34 &0.16  \\ 
		grSCAD &1 &0.62 &0.30 &0.17 &1 &0.70 &0.35 &0.15 \\ 
		grMCP &1 &0.53 &0.25 &0.25 &1 &0.68 &0.34 &0.16  \\ 
		gel &1 &0.68 &0.34 &0.31 &1 &0.72 &0.37 &0.18 \\\hline \hline 
		& \multicolumn{4}{c}{ Setting 3 } & \multicolumn{4}{c}{ Setting 4 } \\ 
		& Sensitivity & Specificity & MCC & MSPE & Sensitivity & Specificity & MCC & MSPE \\ \hline
		GGS-N &1 &1 &1 &0.12 &1 &1 &1 &0.10   \\ 
		GGS-G  &1 &1 &1 &0.12  &1 &1 &1 &0.10   \\ 
		grLasso  &1 &0.70 &0.35 &0.14 &1 &0.79 &0.43 &0.13  \\ 
		grSCAD &1 &0.79 &0.43 &0.14 &1 &0.79 &0.42 &0.14  \\ 
		grMCP &1 &0.79 &0.42 &0.15 &1 &0.76 &0.40 &0.13  \\ 
		gel &1 &0.74 &0.39 &0.18 &1 &0.77 &0.41 &0.12 \\\hline
	\end{tabular}\label{table:comp1}
\end{table}

\begin{table}[!tb]
	\centering\footnotesize
	\caption{
		The summary statistics for Design 1 (Baseline design) are represented for each setting of the true regression coefficients under the second compound symmetry covariance case. Different setting means different choice of the true coefficient $\beta_0$.}
	\vspace{.15cm}
	\begin{tabular}{c c c c c |   c c c c}
		\hline 
		& \multicolumn{4}{c}{ Setting 1 } & \multicolumn{4}{c}{ Setting 2 } \\ 
		& Sensitivity & Specificity & MCC & MSPE & Sensitivity & Specificity & MCC & MSPE \\ \hline
		GGS-G  &0.33 &1 &0.57 &0.11 &0.53 &0.98 &0.59 &0.10 \\ 
		GGS-N &0.54 &0.98 &0.57 &0.10 &0.58 &0.96 &0.60 &0.09   \\ 
		grLasso  &0.32 &0.74 &0.04 &0.11 &1 &0.62 &0.32 &0.10  \\ 
		grSCAD &0.33 &0.91 &0.21 &0.12 &1 &0.62 &0.30 &0.11  \\ 
		grMCP &0.33 &0.83 &0.10 &0.12 &1 &0.79 &0.43 &0.12  \\ 
		gel &0.67 &0.94 &0.48 &0.12 &0.33 &0.85 &0.12 &0.11 \\\hline \hline 
		& \multicolumn{4}{c}{ Setting 3 } & \multicolumn{4}{c}{ Setting 4 } \\ 
		& Sensitivity & Specificity & MCC & MSPE & Sensitivity & Specificity & MCC & MSPE \\ \hline
		GGS-G &0.57 &0.99 &0.65 &0.08 &0.57 &0.99 &0.65 &0.09   \\ 
		GGS-N &0.59 &0.98 &0.65 &0.10 &0.60 &0.98 &0.71 &0.06   \\ 
		grLasso  &1 &0.53 &0.25 &0.10 &1 &0.53 &0.25 &0.09  \\ 
		grSCAD &1 &0.51 &0.20 &0.11 &1 &0.53 &0.25 &0.10 \\ 
		grMCP &1  &0.68 &0.29 &0.10 &1 &0.70 &0.35 &0.10 \\ 
		gel &0.33 &0.86 &0.18 &0.11 &0.33 &0.91 &0.20 &0.09 \\\hline 
	\end{tabular}\label{table:comp2}
\end{table}

\begin{table}[!tb]
	\centering\footnotesize
	\caption{
		The summary statistics for Design 1 (Baseline design) are represented for each setting of the true regression coefficients under the third autoregressive covariance case.
		Different setting means different choice of the true coefficient $\beta_0$.}
	\vspace{.15cm}
	\begin{tabular}{c c c c c |   c c c c}
		\hline 
		& \multicolumn{4}{c}{ Setting 1 } & \multicolumn{4}{c}{ Setting 2 } \\ 
		& Sensitivity & Specificity & MCC & MSPE & Sensitivity & Specificity & MCC & MSPE \\ \hline
		GGS-G &1 &1 &1 &0.12 &1 &1 &1 &0.12   \\ 
		GGS-N  &0.93 &1 &0.96 &0.12 &0.93 &1 &0.96 &0.12  \\ 
		grLasso  &1 &0.61 &0.30 &0.15 &1 &0.61 &0.30 &0.14  \\ 
		grSCAD &1 &0.60 &0.29 &0.15 &1 &0.59 &0.29 &0.14  \\ 
		grMCP &1 &0.70 &0.36 &0.14 &1 &0.70 &0.35 &0.12  \\ 
		gel &1 &0.74 &0.39 &0.14 &1 &0.74 &0.38 &0.13  \\\hline \hline 
		& \multicolumn{4}{c}{ Setting 3 } & \multicolumn{4}{c}{ Setting 4 } \\ 
		& Sensitivity & Specificity & MCC & MSPE & Sensitivity & Specificity & MCC & MSPE \\ \hline
		GGS-G &1 &1 &1 &0.10 &1 &1 &1 &0.06   \\ 
		GGS-N &1 &1 &1 &0.10 &1 &1 &1 &0.06   \\ 
		grLasso  &1 &0.62 &0.30 &0.14 &1 &0.62 &0.30 &0.14  \\ 
		grSCAD &1 &0.68 &0.34 &0.15 &1 &0.71 &0.34 &0.15  \\ 
		grMCP &1 &0.70 &0.35 &0.14 &1 &0.70 &0.35 &0.12  \\ 
		gel &1 &0.74 &0.49 &0.14 &1 &0.72 &0.37 &0.13  \\\hline 
	\end{tabular}\label{table:comp3}
\end{table}

\begin{table}[!tb]
	\centering\footnotesize
	\caption{
		The summary statistics for Design 2 (Dense model design) are represented for each setting of the true regression coefficients under the first isotropic case ($\Sigma = I_p$).
		Different setting means different choice of the true coefficient $\beta_0$.}
	\vspace{.15cm}
	\begin{tabular}{c c c c c |   c c c c}
		\hline 
		& \multicolumn{4}{c}{ Setting 1 } & \multicolumn{4}{c}{ Setting 2 } \\ 
		& Sensitivity & Specificity & MCC & MSPE & Sensitivity & Specificity & MCC & MSPE \\ \hline
		GGS-G &0.46 &1 &0.61 &0.23 &0.55 &1 &0.71 &0.22   \\ 
		GGS-N  &0.44 &1 &0.61 &0.23 &0.51 &1 &0.70 &0.23  \\ 
		grLasso  &1 &0.48 &0.31 &0.17 &1 &0.39 &0.27 &0.19  \\ 
		grSCAD &1 &0.47 &0.31 &0.18 &1 &0.28 &0.31 &0.19  \\ 
		grMCP &1 &0.57 &0.37 &0.15 &1 &0.26& 0.25 &0.17  \\ 
		gel &1 &0.57 &0.37 &0.17 &1 &0.55 &0.35 &0.20 \\\hline \hline 
		& \multicolumn{4}{c}{ Setting 3 } & \multicolumn{4}{c}{ Setting 4 } \\ 
		& Sensitivity & Specificity & MCC & MSPE & Sensitivity & Specificity & MCC & MSPE \\ \hline
		GGS-G &0.65 &1 &0.76 &0.21 &0.72 &1 &0.83 &0.19   \\ 
		GGS-N &0.59 &1 &0.74 &0.32 &0.69 &1 &0.82 &0.21   \\ 
		grLasso  &1 &0.39 &0.27 &0.18 &1 &0.39 &0.27 &0.19  \\ 
		grSCAD &1 &0.48 &0.31 &0.18 &1 &0.57 &0.39 &0.20  \\ 
		grMCP &1 &0.45 &0.30 &0.15 &1 &0.48 &0.31 &0.20  \\ 
		gel &1 &0.55 &0.35 &0.19 &0.83 &0.57 &0.26 &0.24 \\\hline 
	\end{tabular}\label{table:comp4}
\end{table}

\begin{table}[!tb]
	\centering\footnotesize
	\caption{
		The summary statistics for Design 2 (Dense model design) are represented for each setting of the true regression coefficients under the second compound symmetry covariance case.
		Different setting means different choice of the true coefficient $\beta_0$.}
	\vspace{.15cm}
	\begin{tabular}{c c c c c |   c c c c}
		\hline 
		& \multicolumn{4}{c}{ Setting 1 } & \multicolumn{4}{c}{ Setting 2 } \\ 
		& Sensitivity & Specificity & MCC & MSPE & Sensitivity & Specificity & MCC & MSPE \\ \hline
		GGS-G &0.29 &0.98 &0.33 &0.17 &0.31 &0.98 &0.33 &0.15   \\ 
		GGS-N  &0.30 &0.99 &0.35 &0.14 &0.35 &0.99 &0.38 &0.14 \\ 
		grLasso  &0.83 &0.55 &0.25 &0.11 &0.83 &0.56 &0.25 &0.11  \\ 
		grSCAD &0.83 &0.64 &0.31 &0.12 &0.82 &0.64 &0.31 &0.12  \\ 
		grMCP &0.17 &0.82 &0.01 &0.13 &0.17 &0.82 &0.01 &0.14  \\ 
		gel &0.16 &0.91 &0.08 &0.13 &0.17 &0.92 &0.08 &0.13 \\\hline \hline 
		& \multicolumn{4}{c}{ Setting 3 } & \multicolumn{4}{c}{ Setting 4 } \\ 
		& Sensitivity & Specificity & MCC & MSPE & Sensitivity & Specificity & MCC & MSPE \\ \hline
		GGS-G &0.25 &0.97 &0.27 &0.18 &0.26 &0.97 &0.30 &0.15   \\ 
		GGS-N &0.25 &0.98 &0.27 &0.14 &0.31 &0.98 &0.35 &0.14   \\ 
		grLasso  &0.82 &0.55 &0.25 &0.11 &0.83 &0.54 &0.25 &0.11  \\ 
		grSCAD &0.83 &0.64 &0.31 &0.12 &0.83 &0.63 &0.30 &0.12  \\ 
		grMCP &0.17 &0.82 &0.01 &0.13 &0.17 &0.82 &0.01 &0.13  \\ 
		gel &0.17 &0.90 &0.08 &0.13 &0.17 &0.91 &0.08 &0.13  \\\hline 
	\end{tabular}\label{table:comp5}
\end{table}

\begin{table}[!tb]
	\centering\footnotesize
	\caption{
		The summary statistics for Design 2 (Dense model design) are represented for each setting of the true regression coefficients under the third autoregressive covariance case.
		Different setting means different choice of the true coefficient $\beta_0$.}
	\vspace{.15cm}
	\begin{tabular}{c c c c c |   c c c c}
		\hline 
		& \multicolumn{4}{c}{ Setting 1 } & \multicolumn{4}{c}{ Setting 2 } \\ 
		& Sensitivity & Specificity & MCC & MSPE & Sensitivity & Specificity & MCC & MSPE \\ \hline
		GGS-G &0.65 &1 &0.79 &0.22 &0.73 &1 &0.84 &0.14   \\ 
		GGS-N  &0.59 &0.99 &0.72 &0.24 &0.71 &1 &0.81 &0.14 \\ 
		grLasso  &1 &0.57 &0.37 &0.17 &1 &0.45 &0.30 &0.15  \\ 
		grSCAD &1 &0.57 &0.37 &0.17 &1 &0.57 &0.37 &0.16  \\ 
		grMCP &0.83 &0.64 &0.31 &0.21 &1 &0.64 &0.42 &0.15  \\ 
		gel &0.83 &0.68 &0.34 &0.16 &0.83 &0.68 &0.34 &0.15 \\\hline \hline 
		& \multicolumn{4}{c}{ Setting 3 } & \multicolumn{4}{c}{ Setting 4 } \\ 
		& Sensitivity & Specificity & MCC & MSPE & Sensitivity & Specificity & MCC & MSPE \\ \hline
		GGS-G &0.77 &1 &0.86 &0.15 &0.77 &1 &0.85 &0.13   \\ 
		GGS-N &0.72 &1 &0.84 &0.15 &0.73 &1 &0.81 &0.14   \\ 
		grLasso  &1 &0.57 &0.37 &0.14&1 &0.45 &0.30 &0.14  \\ 
		grSCAD &1&0.57 &0.37 &0.15&1 &0.57 &0.37 &0.15 \\ 
		grMCP &1&0.66 &0.43 &0.16&0.83 &0.70 &0.36 &0.16 \\ 
		gel &0.67&0.59&0.17&0.14&0.83 &0.59 &0.28 &0.15 \\\hline 
	\end{tabular}\label{table:comp6}
\end{table}

\begin{table}[!tb]
	\centering\footnotesize
	\caption{
		The summary statistics for Design 3 (High dimensional design) are represented for each setting of the true regression coefficients under the first isotropic case ($\Sigma = I_p$).
		Different setting means different choice of the true coefficient $\beta_0$.}
	\vspace{.15cm}
	\begin{tabular}{c c c c c |   c c c c}
		\hline 
		& \multicolumn{4}{c}{ Setting 1 } & \multicolumn{4}{c}{ Setting 2 } \\ 
		& Sensitivity & Specificity & MCC & MSPE & Sensitivity & Specificity & MCC & MSPE \\ \hline
		GGS-G &0.67 &1 &0.81 &0.13&1 &1 &1 &0.07   \\ 
		GGS-N  &0.63 &1 &0.80 &0.14 &0.73 &1 &0.84 &0.11 \\ 
		grLasso  &1 &0.65 &0.23 &0.20&1 &0.70 &0.26 &0.19  \\ 
		grSCAD &1 &0.73 &0.28 &0.18&1 &0.81 &0.34 &0.17  \\ 
		grMCP &1 &0.81 &0.34 &0.24 &1 &0.81 &0.34 &0.21  \\ 
		gel &1 &0.86 &0.39 &0.31 &1 &0.89 &0.44 &0.29 \\\hline \hline 
		& \multicolumn{4}{c}{ Setting 3 } & \multicolumn{4}{c}{ Setting 4 } \\ 
		& Sensitivity & Specificity & MCC & MSPE & Sensitivity & Specificity & MCC & MSPE \\ \hline
		GGS-G &1 &1 &1 &0.05&1 &1 &1 &0.02   \\ 
		GGS-N &1 &1&1 &0.05 &1 &1 &1 &0.02   \\ 
		grLasso  &1 &0.70 &0.26 &0.12 &1 &0.70 &0.26 &0.13  \\ 
		grSCAD &1 &0.86 &0.39 &0.12 &1 &0.81 &0.34 &0.12 \\ 
		grMCP &1 &0.77 &0.30 &0.12 &1 &0.77 &0.30 &0.12  \\ 
		gel &1 &0.82 &0.34 &0.17 &1 &0.81 &0.34 &0.17 \\\hline 
	\end{tabular}\label{table:comp7}
\end{table}

\begin{table}[!tb]
	\centering\footnotesize
	\caption{
		The summary statistics for Design 3 (High dimensional design) are represented for each setting of the true regression coefficients under the second compound symmetry covariance case. Different setting means different choice of the true coefficient $\beta_0$.}
	\vspace{.15cm}
	\begin{tabular}{c c c c c |   c c c c}
		\hline 
		& \multicolumn{4}{c}{ Setting 1 } & \multicolumn{4}{c}{ Setting 2 } \\ 
		& Sensitivity & Specificity & MCC & MSPE & Sensitivity & Specificity & MCC & MSPE \\ \hline
		GGS-G &0.23 &0.99 &0.27 &0.14 &0.20 &0.98 &0.21 &0.10   \\ 
		GGS-N  &0.33 &0.99 &0.39 &0.11 &0.32 &0.99 &0.31 &0.08 \\ 
		grLasso  &1 &0.81 &0.34 &0.14 &1 &0.81 &0.34 &0.12  \\ 
		grSCAD &1 &0.81 &0.34 &0.13 &1 &0.80  &0.32  &0.12  \\ 
		grMCP &0.33 &0.92 &0.15 &0.18 &0.33  &0.92 &0.15  &0.14  \\ 
		gel &0.33 &0.96 &0.23 &0.17 &0.33 &0.99 &0.39  &0.17 \\\hline \hline 
		& \multicolumn{4}{c}{ Setting 3 } & \multicolumn{4}{c}{ Setting 4 } \\ 
		& Sensitivity & Specificity & MCC & MSPE & Sensitivity & Specificity & MCC & MSPE \\ \hline
		GGS-G &0.26 &0.98 &0.27 &0.15 &0.27 &0.99 &0.31 &0.10   \\ 
		GGS-N &0.40 &1 &0.55 &0.07 &0.31 &0.99 &0.35 &0.10   \\ 
		grLasso  &1 &0.74 &0.28 &0.11&1 &0.78 &0.31 &0.11  \\ 
		grSCAD &1 &0.74 &0.28 &0.11 &1 &0.87 &0.40 &0.11 \\ 
		grMCP &0.33 &0.84 &0.08 &0.14 &0.33 &0.85 &0.08 &0.12 \\ 
		gel &0.33 &0.95 &0.20 &0.18&0.33 &0.95 &0.20 &0.18 \\\hline 
	\end{tabular}\label{table:comp8}
\end{table}

\begin{table}[!tb]
	\centering\footnotesize
	\caption{
		The summary statistics for Design 3 (High dimensional design) are represented for each setting of the true regression coefficients under the third autoregressive covariance case. Different setting means different choice of the true coefficient $\beta_0$.}
	\vspace{.15cm}
	\begin{tabular}{c c c c c |   c c c c}
		\hline 
		& \multicolumn{4}{c}{ Setting 1 } & \multicolumn{4}{c}{ Setting 2 } \\ 
		& Sensitivity & Specificity & MCC & MSPE & Sensitivity & Specificity & MCC & MSPE \\ \hline
		GGS-G &0.73 &1 &0.85 &0.10 &1 &1 &1 &0.09   \\ 
		GGS-N  &0.73 &1 &0.84 &0.10 &1 &1 &0.98 &0.10 \\ 
		grLasso  &1 &0.73 &0.28 &0.15 &1 &0.81 &0.34 &0.13  \\ 
		grSCAD &1 &0.81 &0.34 &0.15 &1 &0.86 &0.39 &0.12 \\ 
		grMCP &1 &0.81 &0.34 &0.13&1 &0.86 &0.39 &0.14  \\ 
		gel &0.67 &0.85 &0.23 &0.18 &1 &0.89 &0.44 &0.16 \\\hline \hline 
		& \multicolumn{4}{c}{ Setting 3 } & \multicolumn{4}{c}{ Setting 4 } \\ 
		& Sensitivity & Specificity & MCC & MSPE & Sensitivity & Specificity & MCC & MSPE \\ \hline
		GGS-G &1 &1 &1 &0.12 &1 &1 &1 &0.11   \\ 
		GGS-N &1 &1 &1 &0.12 &1 &1 &1 &0.11   \\ 
		grLasso  &1 &0.77 &0.30 &0.13 &1 &0.90 &0.45 &0.13  \\ 
		grSCAD &1 &0.90 &0.45 &0.12 &1 &0.90 &0.45 &0.12  \\ 
		grMCP &1 &0.86 &0.39 &0.13 &1 &0.90 &0.45 &0.11  \\ 
		gel &1 &0.86 &0.39 &0.14 &1 &0.87 &0.40 &0.14 \\\hline 
	\end{tabular}\label{table:comp9}
\end{table}

\newpage
Based on the above simulation results, we notice that in Design 1 (Baseline design) and Design 3 (High dimensional design) under the isotropic and the autoregressive covariance cases, our method overall works better than the regularization methods especially in strong signal settings (i.e., Settings 3 and 4). This is because as signal strength gets stronger, the consistency conditions of our method are easier to satisfy which leads to better performance. When the covariance is compound symmetric (i.e., Case 2) such that the correlations both within group and between group are equally 0.5, our method suffers from lower sensitivity compared with grLasso and grSCAD, but still has high specificity and MCC. For Design 2 (Dense model design), generally speaking, the proposed method is able to achieve better specificity and MCC, while the regularization methods have better sensitivity. It seems natural because the regularization methods based on cross-validation tend to include many redundant variables resulting in relatively larger number of  errors for the regularization methods compared with those for the Bayesian methods. Two different sampling algorithms GSS-G and GSS-N perform quite similar, where GSS-G works slightly better under the isotropic and the autoregressive covariance cases, while GSS-N obtains better results under the compound symmetric covariance structure. Overall, our simulation studies indicate that the proposed method can perform well under a variety of configurations with different dimensions, model complexities, sparsity levels, and correlation structures.

\section{Application to MRI Data Analysis}\label{sec:real}
We assess the effectiveness of the proposed approach on an MRI data set for predicting Parkinson's disease (PD). This study is approved by the Medical Research Ethical Committee of Nanjing Brain Hospital (Nanjing, China) in accordance with the Declaration of Helsinki with written informed consent  obtained from all subjects. 276 PD patients and 200 healthy controls (HCs) are recruited. Images are scanned on Siemens verio 3.0T superconducting MRI system with 8-channel  head coil in the department of radiology, and are available upon reasonable request. 

The original 3D T1-weighted images are then normalized using Statistical Parametric Mapping (\url{https://www.fil.ion.ucl.ac.uk/spm/software/spm12/}) on the Matlab platform. The detailed step includes spatial normalization to the Montreal Neurological Institute (MNI) space using the transformation parameters estimated via a unified segmentation algorithm \citep{Ashburner:2005} (Figure \ref{fig_1}). In particular, the unified segmentation algorithm adopts a probabilistic framework that enables image registration, tissue classification, and bias correction to be combined within the same generative model. Individual images for all subjects are therefore mapped from their individual MRI imaging space to a common reference space. As a result, the images of original size of (512, 512, 128) are converted into images of size (41, 48, 40) to reduce the complexity of the following analysis.
\begin{figure} [htbp]
	\begin{subfigure}[t]{.33\textwidth}
		\centering
		\includegraphics[width=1.0\linewidth]{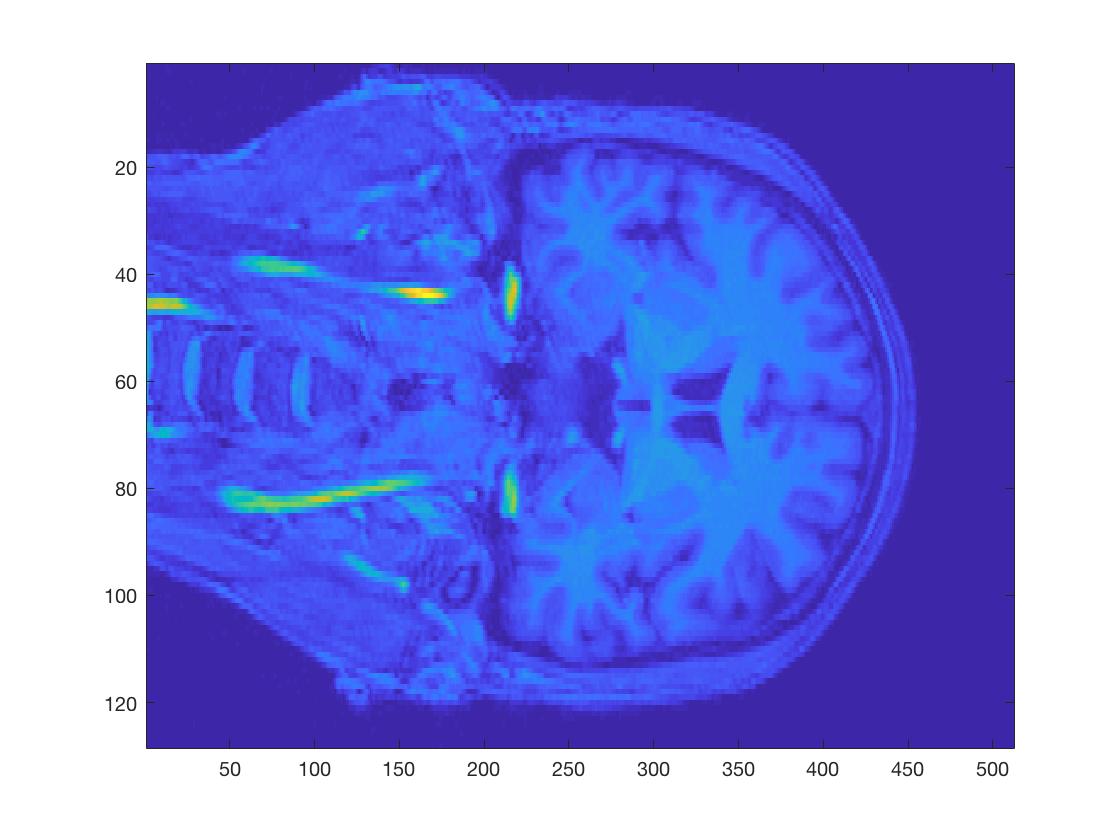}
		\caption{}
	\end{subfigure}
	\begin{subfigure}[t]{.33\textwidth}
		\centering
		\includegraphics[width=1.0\linewidth]{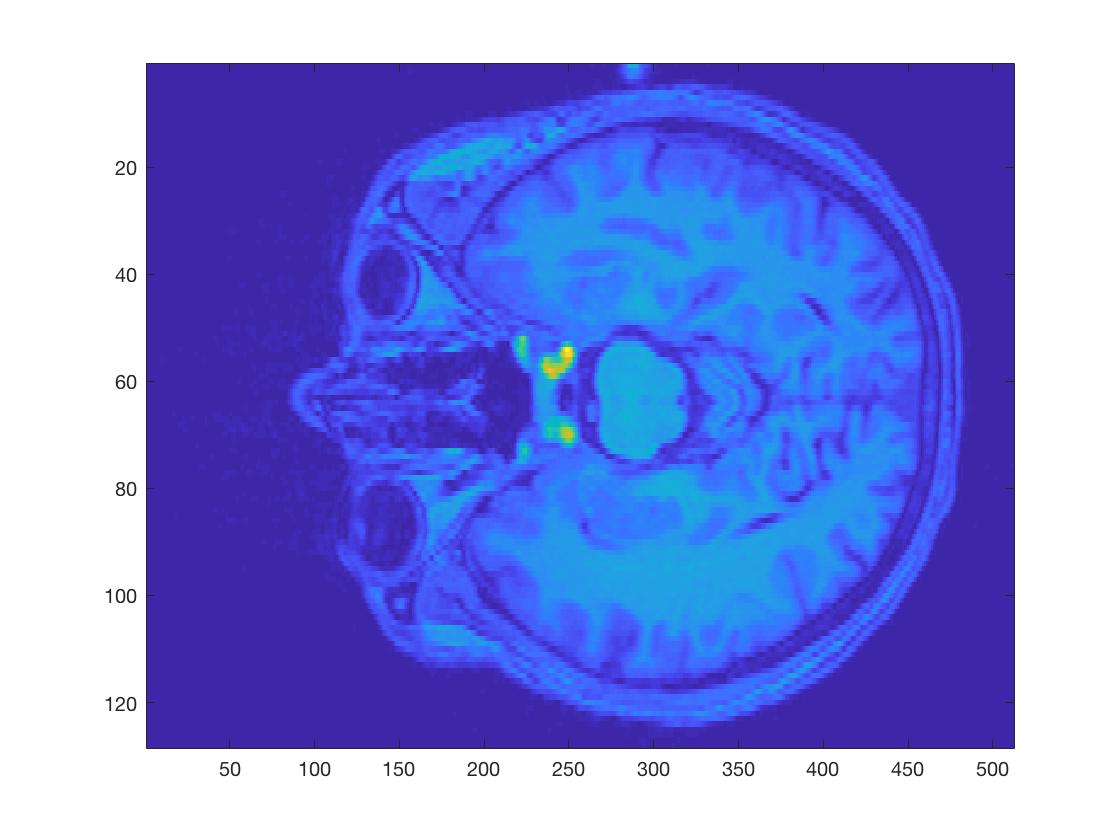}
		\caption{}
	\end{subfigure}
	\begin{subfigure}[t]{.33\textwidth}
		\centering
		\includegraphics[width=1.0\linewidth]{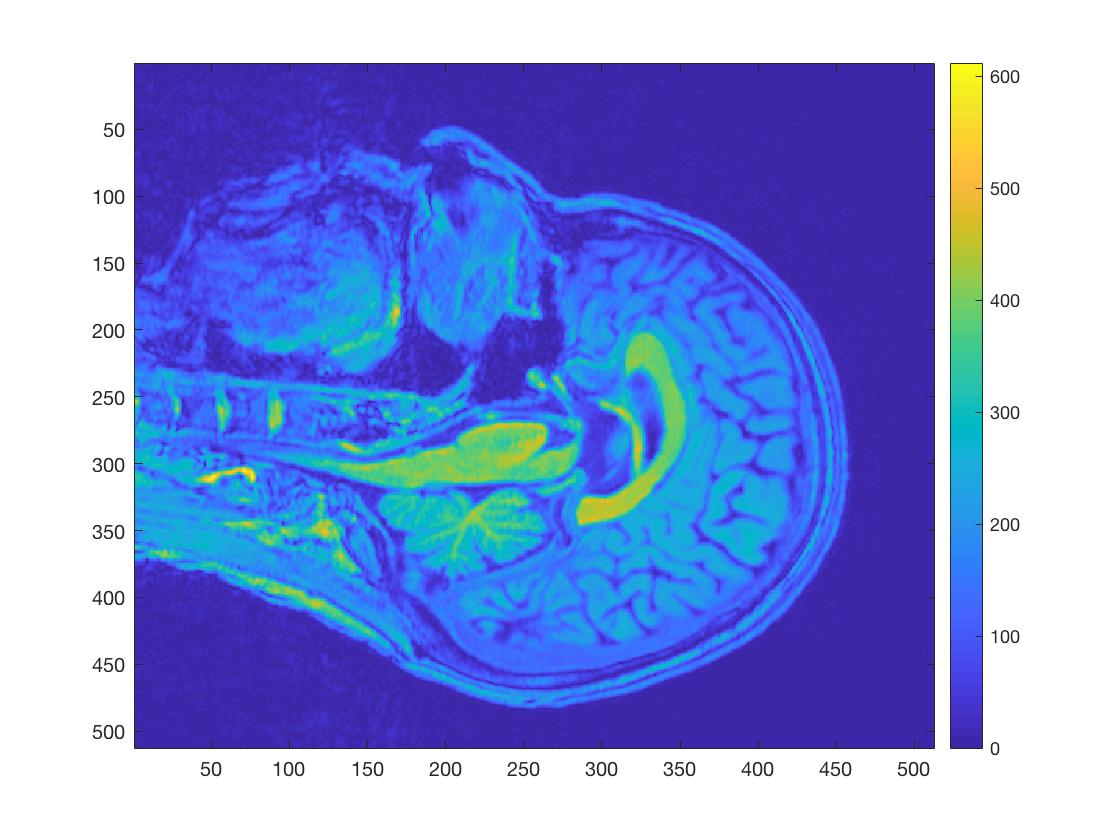}
		\caption{}
	\end{subfigure}
	\begin{subfigure}[t]{.33\textwidth}
		\centering
		\includegraphics[width=1.0\linewidth]{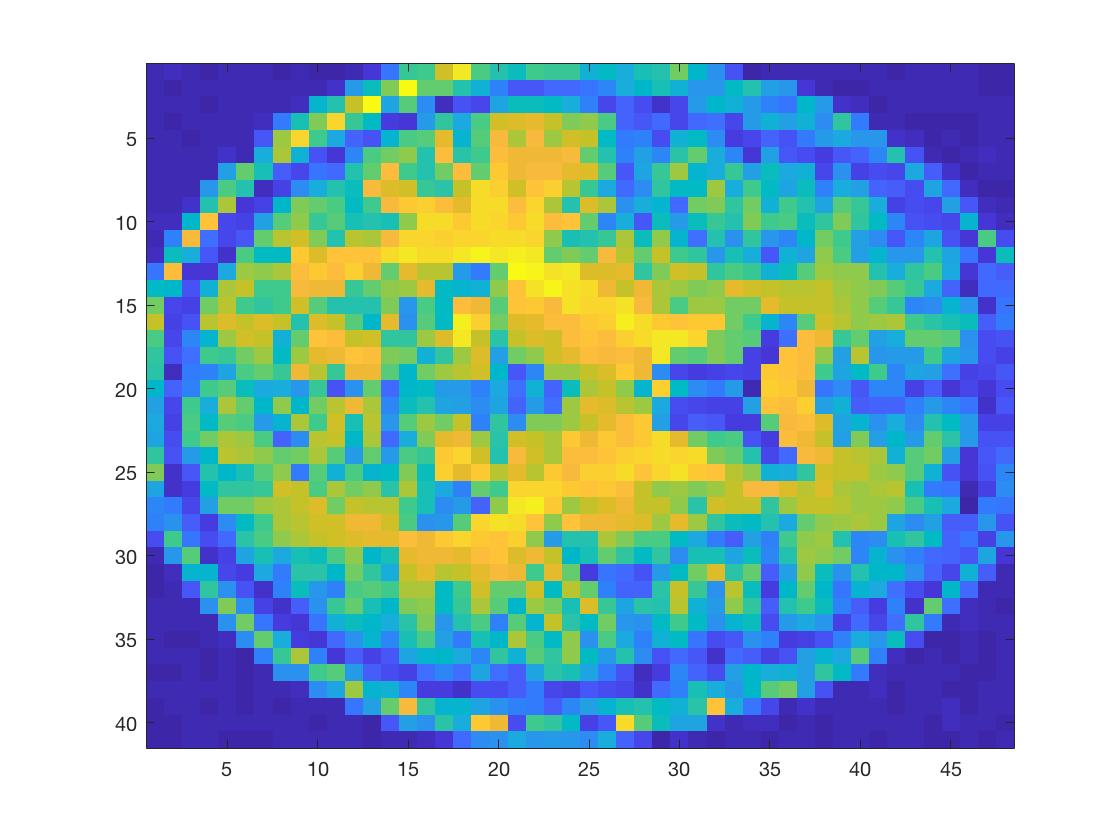}
		\caption{}
	\end{subfigure}
	\begin{subfigure}[t]{.33\textwidth}
		\centering
		\includegraphics[width=1.0\linewidth]{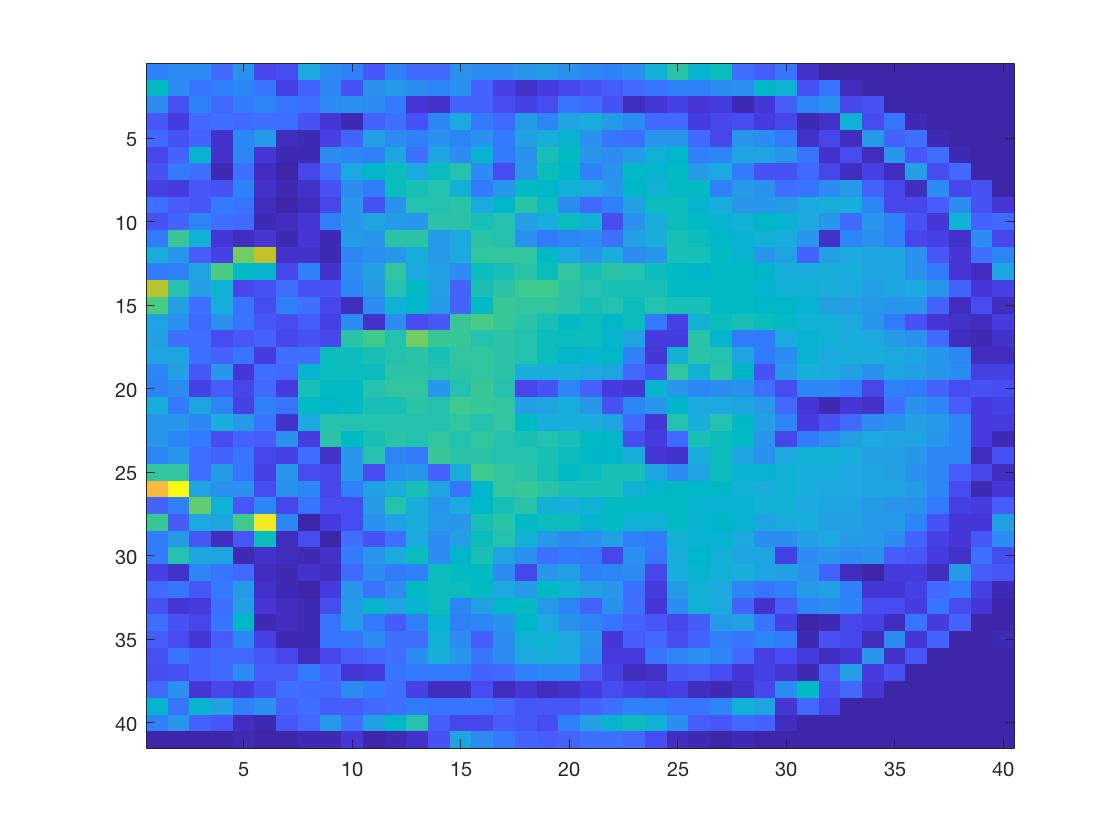}
		\caption{}
	\end{subfigure}
	\begin{subfigure}[t]{.33\textwidth}
		\centering
		\includegraphics[width=1.0\linewidth]{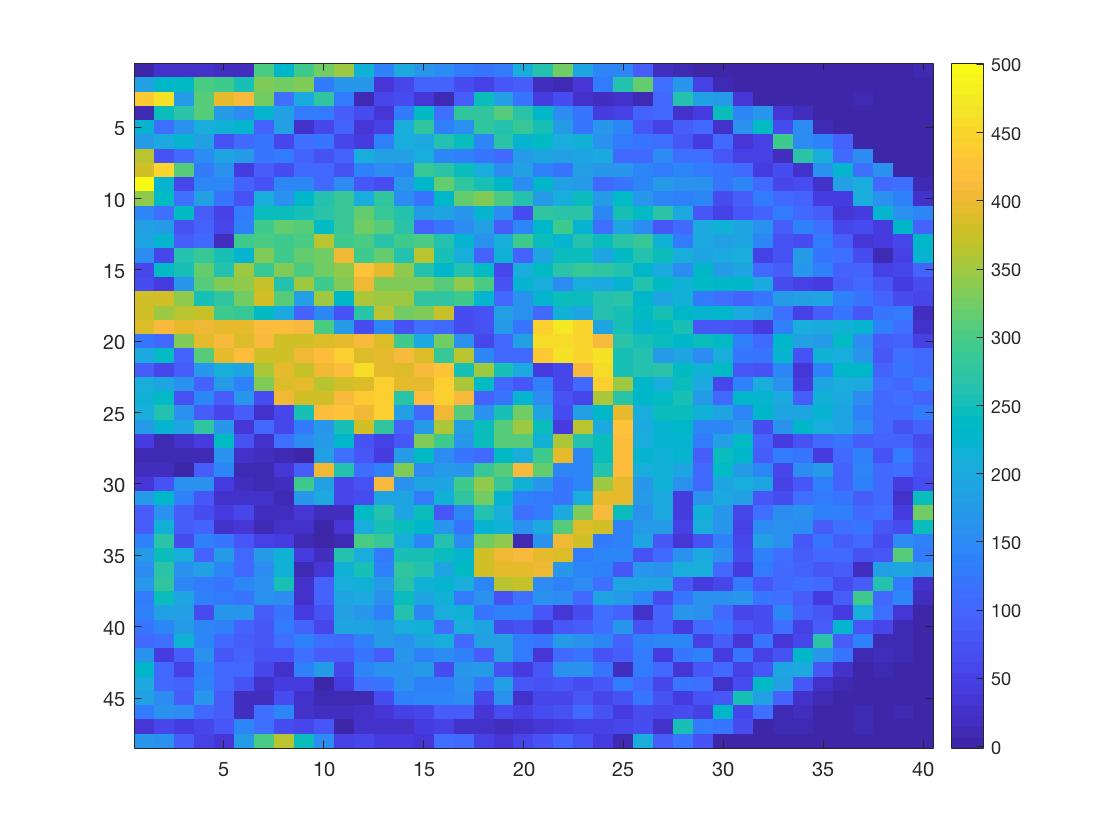}
		\caption{}
	\end{subfigure}
	\caption{Original MRI images (a-c) and normalized MRI images (d-f).}
	\label{fig_1}
\end{figure}

To identify significant brain regions for predicting PD, we first use the Harvard Oxford Atlas to define 13 brain regions across the whole ventral visual pathway and restrict our analysis to these brain regions, which resulted in 1064 voxels retained per image. For ease of implementation, we further divide these 1064 voxels into 108 groups, where each brain region can include several groups. Performance of GSS-G, GSS-N and other regularization methods are evaluated and tested on PD patients and healthy controls. The dataset was randomly divided into training set (80\%) and testing set (20\%) while retaining the PD:HC ratio in both sets. For Bayesian methods, we first obtain the estimated groups for GSS-G and GSS-N using methods specified previously, then evaluate the testing set performance using the standard glm estimates based on the selected groups. In Figure \ref{fig_2}, we draw the receiver operating characteristic (ROC) curves for all the methods. The results are further summarized in Table \ref{table:MRI} where a common cutoff value 0.5 is adopted for thresholding prediction. 

\begin{table}[tb]
	\centering\footnotesize
	\caption{
		The summary statistics for prediction performance on the testing set for all methods.}
	\vspace{.15cm}
	\begin{tabular}{c c c c c}
		\hline 
		& Sensitivity & Specificity & MCC & MSPE  \\ \hline
		GGS-G &0.89 &0.73 &0.63 &0.14   \\ 
		GGS-N  &0.89 &0.57 &0.46 &0.17 \\ 
		grLasso  &0.87 &0.50 &0.43 &0.17  \\ 
		grSCAD &0.86 &0.45 &0.39 &0.18   \\ 
		grMCP &0.86 &0.64 &0.51 &0.20  \\ 
		gel &0.86 &0.55 &0.43 &0.15  \\\hline 
	\end{tabular}\label{table:MRI}
\end{table}

\begin{figure}
	\centering
	\includegraphics[width=0.65\linewidth]{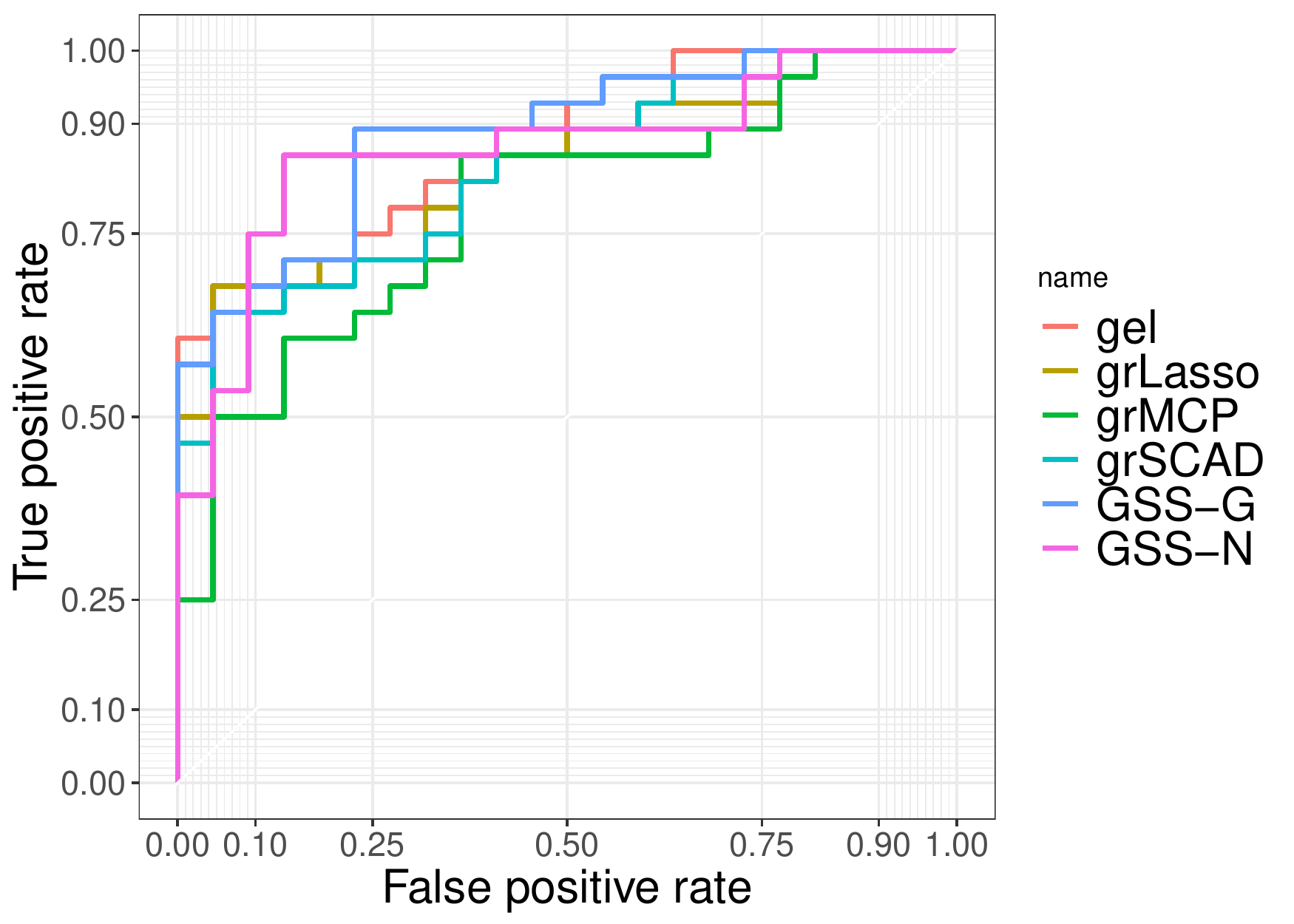}
	\caption{ROC curves comparison between different methods. X-axis: false positive rate (1 - specificity). Y-axis: true positive rate (sensitivity).}
	\label{fig_2}
\end{figure}
From Table \ref{table:MRI} and Figure \ref{fig_2}, it seems that GGS-G has overall better prediction performance compared with other methods. GGS-N has higher sensitivity but lower specificity compared with GGS-G and grMCP. Both GGS-G and GGS-N are able to identify aberrant structures for PD that reside in regions of interest including cingulate gyrus, frontal medial cortex, inferior frontal gyrus, parahippocampal gyrus, brainstem, left hippocampus and left thalamus. These findings suggest disease-related alterations of  structure as the basis for faulty information processing in this disorder. Our findings are in good agreement with the alterative structure and functional features in cortical regions, cerebellum, brainstem, thalamus  and limbic regions in previous studies \citep{Wei2012,Zhu2016,Ling:1,Bing:3,Bing:4,Xuan:7}.

\section{Discussion}\label{sec:disc}
We focus on group selection for logistic regression models in this paper, but often a bi-level selection can be of significant interest, where we further aim to select nonzero coefficients among selected groups.
In this case, we can extend the group spike and slab prior in \eqref{model1} and \eqref{model2} by modifying the slab part, $N_{|G_j|}(\beta_{G_j} \mid 0, \tau^2 I_{|G_j|})$, as follows:
\bea
\sum_{l \in G_j} \Big\{  (1-Z_{jl}) \delta_0(\beta_{G_j, l}) + Z_{jl} N(\beta_{G_j, l} \mid 0, \tau^2)   \Big\} ,
\eea
where $Z_{jl} \overset{i.i.d.}{\sim} {\rm Bernoulli}( q^* )$ for $l \in G_j$ and some $0< q^*<1$.
\cite{Yang:Naveen:2018} also mentioned a similar extension.
However, it will increase the computational burden due to the larger model space, and it is not sure whether theoretical results in this paper can be obtained from this extended model.
For this reason, we focused on the group selection problem in this paper and leave the bi-level selection problem for future research.

\newpage

\section*{Supplementary Material}

Throughout the Supplementary Material, we assume that for any  
$$u \in \{u \in \bbR^n : u \text{ is in the space spanned by the columns of } \sg^{1/2} X_{G_k} \}$$ 
and any model $k \in \{ k \subseteq [r] : |G_k| \le m_n + |G_t| \}$, there exists $\delta^*>0$ such that
\bean\label{delta_star}
\bbE \Big[ \exp \big\{ u^T \sg^{-1/2}(E - \mu) \big\} \Big] &\le& \exp\Big\{  \frac{(1+\delta^*) u^T u}{2} \Big\} ,
\eean
for any $n \ge N(\delta^*)$.
However, as stated in \cite{Naveen:2018}, there always exists $\delta^*>0$ satisfying inequality \eqref{delta_star}, so it is not really a restriction.
Since we will focus on sufficiently large $n$, $\delta^*$ can be considered an arbitrarily small constant, so  we can always assume  that $\delta > \delta^*$.

\begin{proof}[Proof of Theorem \ref{thm:nosuper}]
	Let $M_1 = \{ k: k \supsetneq t, |G_k| \le m_n \}$ and
	\bea
	PR(k,t) &=& \frac{\pi(Z=k \mid E)}{\pi(Z=t \mid E)},
	\eea
	where $t \subseteq [r]$ is the true model.
	We will show that 
	\bean\label{sel_goal1}
	\sum_{k: k \in M_1} PR(k,t) &\overset{P}{\lra}& 0 \quad \text{ as } n\to\infty.
	\eean
	
	By Taylor's expansion of $L_ n(\beta_{G_k})$ around $\what{\beta}_{G_k}$, which is the MLE of $\beta_{G_k}$ under the model $k$, we have
	\bea
	L_n(\beta_{G_k}) &=& L_n (\what{\beta}_{G_k}) - \frac{1}{2} (\beta_{G_k} - \what{\beta}_{G_k} )^T H_n(\tilde{\beta}_{G_k}) (\beta_{G_k} - \what{\beta}_{G_k} ) 
	\eea
	for some $\tilde{\beta}_{G_k}$ such that $\|\tilde{\beta}_{G_k} - \what{\beta}_{G_k}\|_2 \le \|\beta_{G_k} - \what{\beta}_{G_k}\|_2$.
	Furthermore, by Lemmas A.1 and A.3 in \cite{Naveen:2018} and Condition \hyperref[cond_A2]{\rm (A2)}, with probability tending to 1,
	\bea
	L_n(\beta_{G_k}) - L_n (\what{\beta}_{G_k}) 
	&\le& - \frac{1-\epsilon}{2} (\beta_{G_k} - \what{\beta}_{G_k} )^T H_n(\beta_{0,G_k}) (\beta_{G_k} - \what{\beta}_{G_k} ) 
	\eea
	for any $k\in M_1$ and $\beta_{G_k}$ such that $\|\beta_{G_k} - \beta_{0, G_k}\|_2 < c \sqrt{ |G_k| \Lambda_{|k|} \log r /n } = : c w_n$, where $\epsilon = \epsilon_n := c'\sqrt{m_n^2 \Lambda_{m_n} \log r /n } = o(1)$, for some constants $c,c'>0$.
	Note that for $\beta_{G_k}$ such that $\|\beta_{G_k} - \what{\beta}_{G_k}\|_2 = cw_n /2$, 
	\bea
	L_n(\beta_{G_k}) - L_n (\what{\beta}_{G_k})  
	&\le& - \frac{1-\epsilon}{2} \, \|\beta_{G_k} - \what{\beta}_{G_k}\|_2^2 \, \lambda_{\min} \big\{ H_n(\beta_{0,G_k})  \big\} \\
	&\le& - \frac{1-\epsilon}{2} \, \frac{c^2 w_n^2}{4} n \lambda \\
	&=& - \frac{1-\epsilon}{8} c^2 \lambda |G_k| \Lambda_{|k|} \log r \,\, \lra \,\, - \infty \quad \text{ as }n\to\infty,
	\eea
	where the second inequality holds due to Condition \hyperref[cond_A2]{\rm (A2)}.
	It also holds for any $\beta_{G_k}$ such that $\|\beta_{G_k} - \what{\beta}_{G_k}\|_2 > cw_n /2$ by concavity of $L_n(\cdot)$ and the fact that $\what{\beta}_{G_k}$ maximizes $L_n(\beta_{G_k})$.
	
	Define the set
	\bea
	B &:=& \big\{ \beta_{G_k}:  \|\beta_{G_k} - \what{\beta}_{G_k}\|_2 \le cw_n /2  \big\},
	\eea
	then we have $B \subset \{\beta_{G_k}: \|\beta_{G_k}- \beta_{0,G_k}\|_2 \le cw_n  \}$ for some large $c>0$ and any $k\in M_1$, with probability tending to 1.
	Therefore, with probability tending to 1, 
	\bea
	&&\int \exp \big\{ L_n(\beta_{G_k}) \big\} (2\pi \tau^2)^{-\frac{|G_k|}{2}} \exp \big\{ - \frac{1}{2\tau^2}\|\beta_{G_k}\|_2^2 \big\} d \beta_{G_k} \\
	&\le& (2\pi \tau^2)^{-\frac{|G_k|}{2}}  \exp \big\{ L_n(\what{\beta}_{G_k}) \big\} \Big[ \int_B \exp \big\{ -\frac{1-\epsilon}{2} (\beta_{G_k} - \what{\beta}_{G_k} )^T H_n(\beta_{0,G_k}) (\beta_{G_k} - \what{\beta}_{G_k} )  - \frac{1}{2\tau^2}\|\beta_{G_k}\|_2^2  \big\} d\beta_{G_k} \\
	&& + \,\, \exp \big\{ -\frac{1-\epsilon}{8} c^2 \lambda |G_k| \Lambda_{|k|} \log r  \big\} \int_{B^c} \exp \big\{- \frac{1}{2\tau^2}\|\beta_{G_k}\|_2^2  \big\} d\beta_{G_k}  \Big] \\
	&\le&  C(\tau^2)^{-\frac{|G_k|}{2}}  \exp \big\{ L_n(\what{\beta}_{G_k}) \big\} \det \Big\{ H_n(\beta_{0,G_k}) (1-\epsilon) + \tau^{-2}I  \Big\}^{-1/2}    \big( 1+ o(1) \big)
	\eea
	for any $k\in  M_1$ and some constant $C>0$ because, for $A_k = (1-\epsilon)H_n(\beta_{0,G_k})$ and $\beta_{G_k}^* = (A_k+\tau^{-2} I)^{-1} A_k \what{\beta}_{G_k}$,
	\bea
	&& \int_B \exp \big\{ -\frac{1}{2} (\beta_{G_k} - \what{\beta}_{G_k} )^T A_k (\beta_{G_k} - \what{\beta}_{G_k} )  - \frac{1}{2\tau^2}\|\beta_{G_k}\|_2^2  \big\} d\beta_{G_k}  \\
	&\le& \int \exp \big\{ -\frac{1}{2} (\beta_{G_k} - {\beta}^*_{G_k} )^T (A_k+ \tau^{-2} I) (\beta_{G_k} - {\beta}^*_{G_k} )    \big\} d\beta_{G_k} \\
	&&\times \,\, \exp \big\{ -\frac{1}{2} \what{\beta}_{G_k}^T \big( A_k - A_k (A_k+\tau^{-2}I)^{-1} A_k \big)\what{\beta}_{G_k}   \big\}  \\
	&=&  (2\pi)^{|G_k|/2} \det \big( A_k+ \tau^{-2} I \big)^{-1/2} \, \exp \big\{ -\frac{1}{2} \what{\beta}_{G_k}^T \big( A_k - A_k(A_k+\tau^{-2}I)^{-1} A_k \big)\what{\beta}_{G_k}   \big\} \\
	&\le& (2\pi)^{|G_k|/2} \det \big( A_k+ \tau^{-2} I \big)^{-1/2} 
	\eea
	and 
	\bea
	\det \Big\{ H_n(\beta_{0,G_k}) (1-\epsilon) + \tau^{-2}I  \Big\}^{1/2}
	&\le&  \big(  n\Lambda_{|k|} + \tau^{-2}  \big)^{|G_k|/2}  \\
	&\le& \exp \big\{  C |G_k| \log n  \big\}  \\
	&\ll&\exp \big\{ \frac{1-\epsilon}{8} c^2 \lambda |G_k| \Lambda_{|k|} \log r  \big\} 
	\eea
	for some constant $C>0$ and some large constant $c>0$, by Conditions \hyperref[cond_A1]{\rm (A1)}, \hyperref[cond_A2]{\rm (A2)} and \hyperref[cond_A4]{\rm (A4)}.
	Thus, with probability tending to 1,
	\bea
	\pi(Z=k \mid E) &=& C^* \Big(\frac{q}{1-q} \Big)^{|k|} \int \exp \big\{ L_n(\beta_{G_k}) \big\} (2\pi \tau^2)^{-\frac{|G_k|}{2}} \exp \big\{ - \frac{1}{2\tau^2}\|\beta_{G_k}\|_2^2 \big\} d \beta_{G_k} \\
	&\lesssim& C^* \Big(\frac{q}{1-q} \Big)^{|k|}  (\tau^2)^{-\frac{|G_k|}{2}} \exp \big\{ L_n(\what{\beta}_{G_k}) \big\} \det \Big\{ H_n(\beta_{0,G_k}) (1-\epsilon) + \tau^{-2}I  \Big\}^{-1/2}   \big( 1+ o(1) \big) 
	\eea
	for any $k \in M_1$, where $C^*>0$ is a normalizing constant.
	Similarly, with probability tending to 1, we have
	\bea
	\pi(Z = t \mid E) &\gtrsim& C^* \Big(\frac{q}{1-q} \Big)^{|t|}  (\tau^2)^{-\frac{|G_t|}{2}}  \exp \big\{ L_n(\what{\beta}_{G_t}) \big\} \det \Big\{ H_n(\beta_{0,G_t}) (1+\epsilon) + \tau^{-2}I  \Big\}^{-1/2} \\
	&& \times \,\,  \exp \big\{ -\frac{1}{2} \what{\beta}_{G_t}^T \big( A_t - A_t(A_t+\tau^{-2}I)^{-1} A_t \big)\what{\beta}_{G_t}   \big\} \\
	&\gtrsim&  C^* \Big(\frac{q}{1-q} \Big)^{|t|}  (\tau^2)^{-\frac{|G_t|}{2}}  \exp \big\{ L_n(\what{\beta}_{G_t}) \big\} \det \Big\{ H_n(\beta_{0,G_t}) (1+\epsilon) + \tau^{-2}I  \Big\}^{-1/2}
	\eea
	by Lemma \ref{lem_aux1}, where $A_t = (1+\epsilon)H_n(\beta_{0,G_t})$.
	Therefore, with probability tending to 1, 
	\bea
	\frac{\pi(Z = k \mid E) }{\pi(Z = t \mid E) }
	&\lesssim& \Big(\frac{q}{1-q} \Big)^{|k|-|t|}  (n\tau^2)^{-\frac{|G_k|-|G_t|}{2}}  \frac{\det \Big\{ n^{-1} H_n(\beta_{0,G_t}) (1+\epsilon) + (n\tau)^{-2}I  \Big\}^{1/2} }{\det \Big\{ n^{-1}H_n(\beta_{0,G_k}) (1-\epsilon) + (n\tau)^{-2}I  \Big\}^{1/2} } \\
	&& \times \,\, \exp \big\{ L_n(\what{\beta}_{G_k}) - L_n(\what{\beta}_{G_t})  \big\} \\
	&\lesssim& \Big( \frac{2}{r} \Big)^{|k|-|t|}   \Big( \frac{\lambda n\tau^2}{2} \Big)^{-\frac{|G_k|-|G_t|}{2}}  \exp \big\{ L_n(\what{\beta}_{G_k}) - L_n(\what{\beta}_{G_t})  \big\} 
	\eea
	for any $k \in M_1$, where the second inequality holds by Lemma \ref{lem_aux2}.
	Finally, by applying Lemma \ref{lem_aux3}, we obtain
	\bean\label{loglike_diff}
	L_n(\what{\beta}_{G_k}) - L_n(\what{\beta}_{G_t})  
	&\le& b_n (|G_k| - |G_t|)
	\eean
	for any $k \in M_1$ with probability tending to 1, where $b_n = (1+\delta^*)(1+2w) \log r$ such that $1+\delta > (1+\delta^*)(1+2w)$.
	
	Hence, with probability tending to 1,
	\bea
	\sum_{k\in M_1} PR(k,t) 
	&\lesssim& \sum_{k\in M_1} \Big( \frac{2}{r} \Big)^{|k|-|t|}   \Big( \frac{\lambda n\tau^2}{2} \Big)^{-\frac{|G_k|-|G_t|}{2}}  \exp \big\{  b_n(|G_k|-|G_t|)  \big\}  \\
	&\lesssim& \sum_{d=|t|+1}^{r}  \sum_{k: k \supsetneq t, |k| = d}\Big( \frac{2}{r} \Big)^{d-|t|}  r^{-(|G_k|-|G_t|)(1+\delta)} r^{(|G_k|-|G_t|)(1+\delta^*)(1+2w)} \\
	&\lesssim& \sum_{d=|t|+1}^{r}  \sum_{k: k \supsetneq t, |k| = d} \Big( \frac{2}{r} \Big)^{d-|t|}  r^{-C(|G_k|-|G_t|)} \\
	&\lesssim& \sum_{d=|t|+1}^{r}  \binom{r-|t|}{d-|t|}  \Big( \frac{2}{r} \Big)^{d-|t|}  r^{-C(d-|t|)} \\
	&\lesssim& \sum_{d=|t|+1}^{r}  (C'r)^{- C(d -|t|)} \,\,=\,\, o(1)
	\eea
	for some constants $C , C'>0$.
	Thus, we have proved the desired result \eqref{sel_goal1}.
\end{proof}

\begin{proof}[Proof of Theorem \ref{thm:ratio}]
	Let $M_2 = \{k : k \nsupseteq t,  |G_k|\le m_n \}$.
	For any $k \in M_2$, let $k^* = k \cup t$, so that $k^* \in M_1$.
	Let $\beta_{G_{k^*}} $ be the $|G_{k^*}|$-dimensional vector including $\beta_{G_k}$ for $G_k$ and zeros for $G_{t \setminus k}$.
	Then by Taylor's expansion and Lemmas A.1 and A.3 in \cite{Naveen:2018}, with probability tending to 1,
	\bea
	L_n (\beta_{G_{k^*}}) 
	&=& L_n ( \what{\beta}_{G_{k^*}} ) - \frac{1}{2} ( \beta_{G_{k^*}} - \what{\beta}_{G_{k^*}} )^T H_n( \tilde{\beta}_{G_{k^*}})( \beta_{G_{k^*}} - \what{\beta}_{G_{k^*}} ) \\
	&\le& L_n ( \what{\beta}_{G_{k^*}} ) - \frac{1-\epsilon}{2} ( \beta_{G_{k^*}} - \what{\beta}_{G_{k^*}} )^T H_n( {\beta}_{0,G_{k^*}})( \beta_{G_{k^*}} - \what{\beta}_{G_{k^*}} )  \\
	&\le& L_n ( \what{\beta}_{G_{k^*}} ) - \frac{n (1-\epsilon)\lambda}{2} \|\beta_{G_{k^*}} - \what{\beta}_{G_{k^*}}\|_2^2
	\eea
	for any $\beta_{G_{k^*}}$ such that $\| \beta_{G_{k^*}} - \beta_{0,G_{k^*}} \|_2 \le c \sqrt{ |G_{k^*}| \Lambda_{|k^*|} \log r /n } = c w_n$ for some large constant $c>0$.
	Note that
	\bea
	&& \int \exp \Big\{  - \frac{n(1-\epsilon)\lambda}{2} \| \beta_{G_{k^*}} - \what{\beta}_{G_{k^*}} \|_2^2 - \frac{1}{2\tau^2} \|\beta_{G_k}\|_2^2  \Big\} d\beta_{G_k} \\
	&=& \int \exp \Big\{  - \frac{n(1-\epsilon)\lambda}{2} \| \beta_{G_{k}} - \what{\beta}_{G_{k}} \|_2^2 - \frac{1}{2\tau^2} \|\beta_{G_k}\|_2^2  \Big\} d\beta_{G_k} \, \exp \Big\{ - \frac{n(1-\epsilon)\lambda}{2} \|\what{\beta}_{G_{t\setminus k}} \|_2^2  \Big\} \\
	&=& (2\pi)^{\frac{|G_k|}{2}} \big\{ n(1-\epsilon)\lambda  + \tau^{-2}\big\}^{-\frac{|G_k|}{2}} \exp \Big\{ - \frac{n(1-\epsilon)\lambda}{2} \|\what{\beta}_{G_{t\setminus k}} \|_2^2  \Big\}.
	\eea
	Define the set
	\bea
	B_* &:=& \big\{ \beta_{G_k}:  \|\beta_{G_{k^*}} - \what{\beta}_{G_{k^*}}\|_2 \le cw_n /2  \big\},
	\eea
	for some large constant $c>0$, then by similar arguments used for super sets, with probability tending to 1,
	\bea
	&& \pi(Z= k\mid E) \\
	&=&  C^* \Big(\frac{q}{1-q} \Big)^{|k|} \int_{B_* \cup B_*^c} \exp \big\{ L_n(\beta_{G_{k^*}}) \big\} (2\pi \tau^2)^{-\frac{|G_k|}{2}} \exp \big\{ - \frac{1}{2\tau^2}\|\beta_{G_k}\|_2^2 \big\} d \beta_{G_k}  \\
	&\lesssim& C^*\Big(\frac{q}{1-q} \Big)^{|k|} \exp \big\{  L_n(\what{\beta}_{G_{k^*}})  \big\} (\tau^2)^{-\frac{|G_k|}{2}} \\
	&&\times \,\, \Big[  \big\{ n(1-\epsilon)\lambda  + \tau^{-2}\big\}^{-\frac{|G_k|}{2}} \exp \Big\{ - \frac{n(1-\epsilon)\lambda}{2} \|\what{\beta}_{G_{t\setminus k}} \|_2^2  \Big\}  +  \exp \big\{  - c C |G_{k^*}| \Lambda_{|k^*|} \log r \big\} \Big]
	\eea
	for any $k \in M_2$ and for some constant $C>0$.
	Since the lower bound for $\pi(Z= t\mid E)$ can be derived as before, it leads to
	\bean
	&& PR(k,t) \nonumber\\
	&\lesssim& 
	\Big(\frac{q}{1-q} \Big)^{|k|- |t|} (n \tau^2)^{- \frac{|G_k|- |G_t|}{2}} 
	\frac{ \det \big\{ (1+\epsilon)n^{-1} H_n(\beta_{0,G_t}) + (n\tau^2)^{-1} I  \big\}^{1/2} }{ \big\{ (1-\epsilon)\lambda + (n\tau^2)^{-1} \big\}^{|G_k|/2} } \nonumber \\
	&&\times \,\, \exp \big\{ L_n(\what{\beta}_{G_{k^*}}) - L_n(\what{\beta}_{G_t})  \big\}  \,  \exp \Big\{ - \frac{n(1-\epsilon)\lambda}{2} \|\what{\beta}_{G_{t\setminus k}} \|_2^2  \Big\} \label{M2_part1}\\
	&+& \Big(\frac{q}{1-q} \Big)^{|k|- |t|} (n \tau^2)^{- \frac{|G_k|- |G_t|}{2}}   \det \big\{ (1+\epsilon)n^{-1} H_n(\beta_{0,G_t}) + (n\tau^2)^{-1} I  \big\}^{1/2}  \exp \big\{ L_n(\what{\beta}_{G_{k^*}}) - L_n(\what{\beta}_{G_t})  \big\} \nonumber \\
	&&\times \,\, \exp \big\{  - c\, C |G_{k^*}| \Lambda_{|k^*|} \log r \big\}  \label{M2_part2}
	\eean
	for any $k \in M_2$ with probability tending to 1.
	
	We first focus on \eqref{M2_part1}.
	Note that
	\bea
	&& \frac{ \det \big\{ (1+\epsilon)n^{-1} H_n(\beta_{0,G_t}) + (n\tau^2)^{-1} I  \big\}^{1/2} }{ \big\{ (1-\epsilon)\lambda + (n\tau^2)^{-1} \big\}^{|G_k|/2} }\\
	&\le& \frac{  \big\{ (1+\epsilon) \Lambda_{|t|}  + (n\tau^2)^{-1}  \big\}^{|G_t|/2} }{ \big\{ (1-\epsilon)\lambda + (n\tau^2)^{-1} \big\}^{|G_k|/2} } \\
	&=&  \Big\{  \frac{(1+\epsilon) \Lambda_{|t|}  + (n\tau^2)^{-1} }{(1-\epsilon)\lambda + (n\tau^2)^{-1}}  \Big\}^{|G_t|/2}   \Big\{ \frac{1}{(1-\epsilon)\lambda + (n\tau^2)^{-1} } \Big\}^{(|G_k|-|G_t|)/2} \\
	&\lesssim& \exp \big\{ |G_t| \log  \Lambda_{|t|}   \big\} \, \Big\{ \frac{1}{(1-\epsilon)\lambda + (n\tau^2)^{-1} } \Big\}^{(|G_k|-|G_t|)/2}
	\eea
	for some constant $C>0$.
	Furthermore, by the same arguments used in \eqref{loglike_diff}, we have
	\bea
	L_n(\what{\beta}_{G_{k^*}}) - L_n(\what{\beta}_{G_t})  
	&\lesssim& C_* \log r (|G_{k^*}| - |G_t|) \\
	&=& C_* \log r |G_{t\setminus k}| +  C_* \log r ( |G_k|- |G_t| )
	\eea
	for some constant $C_* >0$ and for any $k \in M_2$ with probability tending to 1.
	Here we choose $C_* = (1+\delta^*)(1+2w)$ if $|G_k|> |G_t|$ or $C_* = 2+\delta$ if $|G_k| \le |G_t|$ so that
	\bea
	(n \tau^2)^{- \frac{|G_k|- |G_t|}{2}} \Big\{ \frac{1}{(1-\epsilon)\lambda + (n\tau^2)^{-1} } \Big\}^{\frac{|G_k|- |G_t|}{2}}  r^{ C_* ( |G_k|- |G_t| )} 
	&\lesssim& r^{ (C_* - 1 - \delta) ( |G_k|- |G_t| ) } \,\,=\,\,o(1),
	\eea
	where the inequality holds by Condition \hyperref[cond_A4]{\rm (A4)}.
	To be more specific, we divide $M_2$ into two disjoint sets $M_2' = \{ k : k \in M_2 ,  |G_t| < |G_k| \le m_n \}$ and $M_2^* = \{ k : k \in M_2 , |G_k| \le |G_t| \}$, and will show that $\sum_{k \in M_2'} PR(k,t)  + \sum_{k \in M_2^*} PR(k,t)\lra 0$ as $n\to\infty$ with probability tending to 1.
	Thus, we can choose different $C_*$ for $M_2'$ and $M_2^*$ as long as $C_* \ge (1+\delta^*)(1+2w)$.	
	On the other hand, with probability tending to 1, by Condition \hyperref[cond_A3]{\rm (A3)},
	\bea
	\exp \Big\{ - \frac{n(1-\epsilon)\lambda}{2} \|\what{\beta}_{G_{t\setminus k}} \|_2^2  \Big\} 
	&\le&  \exp \Big[ - \frac{n(1-\epsilon)\lambda}{2}  \Big\{  \| \beta_{0,G_{t\setminus k}} \|_2^2 - \|\what{\beta}_{G_{t\setminus k}} - \beta_{0,G_{t\setminus k}} \|_2^2 \Big\}  \Big]  \\
	&\le&  \exp \Big[ - \frac{n(1-\epsilon)\lambda}{2}  \Big\{  |t\setminus k|^2 \min_{j \in t}\|\beta_{0,G_j}\|_2^2 - c' w_n'^2 \Big\}  \Big]  \\
	&\le& \exp \Big\{ - \frac{(1-\epsilon)\lambda}{2}   \big( c_0 |t\setminus k|^2 |G_t|    - c' |G_{t\setminus k}|  \big)  \Lambda_{|t|} \log r   \Big\}  \\
	&\le& \exp \Big\{  - \frac{(1-\epsilon)\lambda}{2}  ( c_0 - c') |t\setminus k|^2 |G_t|  \Lambda_{|t|} \log r    \Big\}
	\eea
	for any $k \in M_2$ and some large constants $c_0> c'>0$, where $w_n'^2 = |G_{t\setminus k}| \Lambda_{|t\setminus k|} \log r/n $.
	Here, $c'  = 5 \lambda^{-2} (1-\epsilon)^{-2}$ by the proof of Lemma A.3 in \cite{Naveen:2018}.
	
	Hence,  \eqref{M2_part1} for any $k \in M_2$ is bounded above by
	\bea
	&& \Big( \frac{2}{r} \Big)^{|k|-|t|} \exp \Big\{  |G_t| \log \Lambda_{|t|} + C_* |G_{t\setminus k}| \log r - \frac{(1-\epsilon)\lambda}{2}   (c_0- c') |t\setminus k|^2 |G_t|  \Lambda_{|t|} \log r    \Big\}   \\
	&\lesssim&    \exp \Big\{ - \Big( \frac{(1-\epsilon)\lambda}{2}  (c_0- c')  - C_* - o(1) \Big) |t\setminus k|^2 |G_t|  \Lambda_{|t|} \log r  + |t| \log r  \Big\} \\
	&\le&    \exp \Big\{ - \Big( \frac{(1-\epsilon)\lambda}{2}  (c_0- c')  - C_* - 1 - o(1) \Big) |t\setminus k|^2 |G_t|  \Lambda_{|t|} \log r  \Big\} \\
	&\le&  \exp \Big\{ - \Big( \frac{(1-\epsilon)\lambda}{2}  (c_0- c')  - C_* - 1 - o(1) \Big)  |G_t|  \Lambda_{|t|} \log r  \Big\} 
	\eea
	with probability tending to 1, where the last term is of order $o(1)$ because we assume $c_0 = \frac{1}{(1-\epsilon_0)\lambda}\big\{  7 + \frac{5}{(1-\epsilon_0)\lambda} \big\} > \frac{2}{(1-\epsilon)\lambda}(C_* +1 + o(1)) + c'$.
	
	It is easy to see that the maximum \eqref{M2_part2} over $k \in M_2$ is also of order $o(1)$ with probability tending to 1 by  the similar arguments.
	Since we have \eqref{sel_goal1} in the proof of Theorem \ref{thm:nosuper}, it completes the proof.
\end{proof}

\begin{proof}[Proof of Theorem \ref{thm:selection}]
	Let $M_2 = \{k : k \nsupseteq t,  |G_k|\le m_n \}$.
	Since we have Theorem \ref{thm:nosuper}, it suffices to show that 
	\bean\label{sel_cons_goal}
	\sum_{k: k \in M_2} PR(k,t) &\overset{P}{\lra}& 0 \quad \text{ as } n\to\infty.
	\eean	
	By the proof of Theorem \ref{thm:ratio}, the summation of \eqref{M2_part1} over $k \in M_2$ is bounded above by
	\bea
	&&\sum_{k\in M_2} \Big( \frac{2}{r} \Big)^{|k|-|t|} \exp \Big\{ -\Big( \frac{(1-\epsilon)\lambda}{2}  (c_0- c')  - C_* - o(1) \Big) |t\setminus k|^2 |G_t|  \Lambda_{|t|} \log r    \Big\}   \\
	&\le& \sum_{|k|=0}^r \sum_{v = 0}^{(|t|-1) \wedge |k|} \binom{|t|}{v} \binom{r-|t|}{|k|-v} \Big( \frac{2}{r} \Big)^{|k|-|t|}  \exp \Big\{ - \Big( \frac{(1-\epsilon)\lambda}{2}  (c_0- c')  - C_* - o(1) \Big)(|t|-v)^2 |G_t|  \Lambda_{|t|} \log r    \Big\} \\
	&\le& \sum_{|k|=0}^r \sum_{v = 0}^{(|t|-1) \wedge |k|}  \big( |t| r \big)^{|t| -v} \exp \Big\{ - \Big( \frac{(1-\epsilon)\lambda}{2}  (c_0- c')  - C_* - o(1) \Big) (|t|-v)^2 |G_t|  \Lambda_{|t|} \log r    \Big\} \\
	&\le& \exp \Big\{ - \Big( \frac{(1-\epsilon)\lambda}{2}  (c_0- c')  - C_* - o(1) \Big)  |G_t|  \Lambda_{|t|} \log r + 2(|t|+2) \log r    \Big\} \\
	&\le& \exp \Big\{ -   \Big( \frac{(1-\epsilon)\lambda}{2}  (c_0- c')  - C_* - 6 - o(1) \Big)  |G_t|  \Lambda_{|t|} \log r \Big\}
	\eea
	with probability tending to 1, where $C_* \le 2 +\delta$ is defined in the proof of Theorem \ref{thm:ratio}.
	Note that the last term is of order $o(1)$ because we assume $c_0 = \frac{1}{(1-\epsilon_0)\lambda}\big\{ 17 + \frac{5}{(1-\epsilon_0)\lambda} \big\} > \frac{2}{(1-\epsilon)\lambda}(C_* + 6 + o(1)) + c'$.
	It is easy to see that the summation of \eqref{M2_part2} over $k \in M_2$ is also of order $o(1)$ with probability tending to 1 by  the similar arguments.
\end{proof}

\begin{lemma}\label{lem_aux1}
	Under Condition \hyperref[cond_A2]{\rm (A2)}, we have
	\bea
	\exp \big\{ \frac{1}{2} \what{\beta}_{G_t}^T \big( A_t - A_t(A_t+\tau^{-2}I)^{-1} A_t \big)\what{\beta}_{G_t}   \big\}  
	&\lesssim& 1
	\eea
	for any $k \in M_1$ with probability tending to 1.	
\end{lemma}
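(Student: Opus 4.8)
The plan is to reduce the quadratic form to a multiple of $\|\what\beta_{G_t}\|_2^2$ and then control the MLE. First I would use the elementary identity $A_t - A_t(A_t+\tau^{-2}I)^{-1}A_t = \tau^{-2}A_t(A_t+\tau^{-2}I)^{-1}$, valid for any positive definite $A_t$. Because $A_t$ and $\tau^{-2}I$ commute, the right-hand side is positive semidefinite with eigenvalues $(\tau^{2}+a^{-1})^{-1}\le \tau^{-2}$ as $a$ ranges over the positive eigenvalues of $A_t$; positivity of these eigenvalues follows from the restricted eigenvalue lower bound in Condition (A2), and the factor $A_t=(1+\epsilon)H_n(\beta_{0,G_t})$ with $\epsilon=o(1)$ plays no role. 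Hence
\[
\tfrac12\,\what\beta_{G_t}^T\big(A_t - A_t(A_t+\tau^{-2}I)^{-1}A_t\big)\what\beta_{G_t}\;\le\;\tfrac{1}{2\tau^2}\,\|\what\beta_{G_t}\|_2^2,
\]
and, since the left-hand side does not depend on $k$, it suffices to show $\tau^{-2}\|\what\beta_{G_t}\|_2^2=O(1)$ with probability tending to $1$.

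Next I would decompose $\|\what\beta_{G_t}\|_2^2\le 2\|\beta_{0,G_t}\|_2^2+2\|\what\beta_{G_t}-\beta_{0,G_t}\|_2^2$. The first term is $O\big(\tau^2(\log r/n)^d\big)$ directly from the last part of Condition (A2), so it contributes $O\big((\log r/n)^d\big)=O(1)$ by Condition (A1). For the estimation error I would invoke the MLE concentration bound of Lemma A.3 in \cite{Naveen:2018} (already used in the proof of Theorem \ref{thm:nosuper}), which together with the lower bound in Condition (A2) yields $\|\what\beta_{G_t}-\beta_{0,G_t}\|_2^2\lesssim |G_t|\,\Lambda_{|t|}\log r/n$ with probability tending to $1$. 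Since $\tau^2\gtrsim 1$ by Condition (A4), it then remains only to bound $|G_t|\,\Lambda_{|t|}\log r/n$.

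The one step needing a little care is this last bound. Using $|G_t|\le m_n\le (n/\log r)^{(1-d')/2}$ and $\Lambda_{|t|}\le \Lambda_{m_n+|G_t|}\le C^2(n/\log r)^d$ from Condition (A2), the product is at most a constant times $(n/\log r)^{\,d-(1+d')/2}$, whose exponent is negative because the ordering $(1+d)/2\le d'\le 1$ forces $d<d'\le 1$ and hence $2d<1+d'$; so this term is in fact $o(1)$. Combining the three estimates gives $\tfrac{1}{2\tau^2}\|\what\beta_{G_t}\|_2^2=O(1)$ on an event of probability tending to $1$, and exponentiating yields the claimed $\lesssim 1$. I expect the only genuinely nontrivial point to be the bookkeeping with the exponents of $n/\log r$ produced by the model-size cap $m_n$ and the maximum-eigenvalue growth bound; this is exactly the place where the precise ordering $0\le d<(1+d)/2\le d'\le 1$ imposed in Condition (A2) is used.
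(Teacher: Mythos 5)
Your proof is correct, but it takes a somewhat different route from the paper's. The paper keeps the quadratic form weighted by $A_t$: it lower-bounds $(A_t+\tau^{-2}I)^{-1}$ by $\frac{n\tau^2\lambda}{n\tau^2\lambda+1}A_t^{-1}$ (using the restricted eigenvalue lower bound in (A2)), reducing the exponent to $\frac{1}{2(n\tau^2\lambda+1)}\what{\beta}_{G_t}^T A_t \what{\beta}_{G_t}$, and then bounds this by $\tau^{-2}\lambda_{\max}\{n^{-1}H_n(\beta_{0,G_t})\}\,\|\what{\beta}_{G_t}\|_2^2 \lesssim \tau^{-2}(n/\log r)^d(\|\beta_{0,G_t}\|_2^2+o(1))$, invoking Lemma A.3 of Narisetty and He for the MLE and the signal-size bound $\|\beta_{0,G_t}\|_2^2=O(\tau^2(\log r/n)^d)$ in (A2). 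You instead use the exact identity $A_t-A_t(A_t+\tau^{-2}I)^{-1}A_t=\tau^{-2}A_t(A_t+\tau^{-2}I)^{-1}$, whose eigenvalues are uniformly at most $\tau^{-2}$, so the exponent is at most $\frac{1}{2\tau^2}\|\what{\beta}_{G_t}\|_2^2$; you then split off the estimation error and do the exponent bookkeeping $|G_t|\Lambda_{|t|}\log r/n\lesssim(n/\log r)^{d-(1+d')/2}=o(1)$ via $m_n$ and the ordering $d<(1+d)/2\le d'$. Your bound is actually tighter: it avoids carrying the factor $(n/\log r)^d$ against the MLE error term, which is precisely the step the paper glosses over when it absorbs $\tau^{-2}(n/\log r)^d\cdot o(1)$ into $O(1)$. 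The only cosmetic difference is that you explicitly invoke (A1) and (A4) (for $\log r=o(n)$ and $\tau^2\gtrsim1$) beyond the lemma's stated Condition (A2); this is harmless since the lemma is only applied under those conditions, and the paper's own argument implicitly needs the same facts.
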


\begin{proof}
	Note that, by Condition \hyperref[cond_A2]{\rm (A2)},
	\bea
	(A_t + \tau^{-2} I)^{-1} &\ge& (A_t + (n\tau^2 \lambda)^{-1} A_t)^{-1} \,\,=\,\, \frac{n\tau^2 \lambda}{n\tau^2 \lambda+1} A_t^{-1},
	\eea
	which implies that
	\bea
	\frac{1}{2} \what{\beta}_{G_t}^T \big( A_t - A_t(A_t+\tau^{-2}I)^{-1} A_t \big)\what{\beta}_{G_t}  
	&\le& \frac{1}{2(n\tau^2 \lambda+1)} \what{\beta}_{G_t}^T A_t \what{\beta}_{G_t}   .
	\eea
	Thus, we complete the proof if we show that
	\bea
	 \frac{1}{ n\tau^2 \lambda } \what{\beta}_{G_t}^T H_n(\beta_{0,G_t}) \what{\beta}_{G_t} 
	 &\le& C
	\eea
	for some constant $C>0$ and any $k\in M_1$ with probability tending to 1.
	By Lemma A.3 in \cite{Naveen:2018} and Condition \hyperref[cond_A2]{\rm (A2)},
	\bea
	 \frac{1}{n \tau^2 }\what{\beta}_{G_t}^T H_n(\beta_{0,G_t}) \what{\beta}_{G_t}
	 &\le& \frac{1}{\tau^2 } \lambda_{\max}\big\{ n^{-1} H_n(\beta_{0,G_t}) \big\} \|\what{\beta}_{G_t}\|_2^2 \\
	 &\le& \frac{1}{\tau^2} \Big( \frac{n}{\log r} \Big)^d \big( \|\beta_{0,G_t}\|_2^2  + o(1) \big) 
	 \,\,=\,\, O(1)
	\eea
	for any $k\in M_1$ with probability tending to 1.
\end{proof}

\begin{lemma}\label{lem_aux2}
	Under Conditions \hyperref[cond_A2]{\rm (A2)} and \hyperref[cond_A4]{\rm (A4)}, we have
	\bea
	DR(k, t) 
	&:=& \frac{\det \Big\{ n^{-1} H_n(\beta_{0,G_t}) (1+\epsilon) + (n\tau)^{-2}I  \Big\} }{\det \Big\{ n^{-1}H_n(\beta_{0,G_k}) (1-\epsilon) + (n\tau)^{-2}I  \Big\}} \\
	&\lesssim& \Big( \frac{2}{\lambda} \Big)^{|G_k|- |G_t|}
	\eea
	for any $k \in M_1$.	
\end{lemma}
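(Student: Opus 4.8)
The plan is to reduce the ratio of determinants to a one-dimensional eigenvalue comparison via a Schur-complement identity, exploiting the nesting $t\subsetneq k$. The crucial first observation is that, since the true coefficient $\beta_0$ is supported on $G_t$ and every $k\in M_1$ satisfies $k\supsetneq t$, we have $x_{iG_k}^T\beta_{0,G_k}=x_{iG_t}^T\beta_{0,G_t}$ for every $i$, so the weight matrix is unchanged: $\sg(\beta_{0,G_k})=\sg(\beta_{0,G_t})=\sg$, whence $H_n(\beta_{0,G_k})=X_{G_k}^T\sg X_{G_k}$ and $H_n(\beta_{0,G_t})=X_{G_t}^T\sg X_{G_t}$. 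After reordering the columns of $X_{G_k}$ so that those belonging to $G_t$ come first, $\tilde H_t:=n^{-1}H_n(\beta_{0,G_t})$ is exactly the leading $|G_t|\times|G_t|$ principal block of $\tilde H_k:=n^{-1}H_n(\beta_{0,G_k})$.

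Write $\eta:=(n\tau)^{-2}$ and $M:=(1-\epsilon)\tilde H_k+\eta I$, which is positive definite because $\lambda_{\min}(\tilde H_k)\ge\lambda>0$ by Condition \hyperref[cond_A2]{\rm (A2)}, and whose leading $|G_t|\times|G_t|$ block is $(1-\epsilon)\tilde H_t+\eta I$. The block-determinant (Schur-complement) formula gives $\det(M)=\det\!\big((1-\epsilon)\tilde H_t+\eta I\big)\,\det(S)$, where $S$ is the Schur complement of the leading block. Since any Schur complement of a positive definite matrix satisfies $S\succeq\lambda_{\min}(M)\,I$, and $\lambda_{\min}(M)\ge(1-\epsilon)\lambda_{\min}(\tilde H_k)\ge(1-\epsilon)\lambda\ge\lambda/2$ for $n$ large enough that $\epsilon=\epsilon_n\le1/2$ (recall $\epsilon_n=o(1)$, and Condition \hyperref[cond_A2]{\rm (A2)} applies because $|G_k|\le m_n\le m_n+|G_t|$), every eigenvalue of $S$ is at least $\lambda/2$, so $\det(S)\ge(\lambda/2)^{|G_k|-|G_t|}$. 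Combining,
\[
DR(k,t)\;\le\;\frac{\det\!\big((1+\epsilon)\tilde H_t+\eta I\big)}{\det\!\big((1-\epsilon)\tilde H_t+\eta I\big)}\Big(\frac{2}{\lambda}\Big)^{|G_k|-|G_t|}.
\]
Because $(1+\epsilon)\tilde H_t+\eta I$ and $(1-\epsilon)\tilde H_t+\eta I$ are simultaneously diagonalizable with eigenvalues $(1+\epsilon)\nu_j+\eta$ and $(1-\epsilon)\nu_j+\eta$ respectively, and $\frac{(1+\epsilon)\nu_j+\eta}{(1-\epsilon)\nu_j+\eta}\le\frac{1+\epsilon}{1-\epsilon}$ for every eigenvalue $\nu_j\ge0$ of $\tilde H_t$, the leading ratio is at most $\big(\tfrac{1+\epsilon}{1-\epsilon}\big)^{|G_t|}$.

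Finally I would check that $\big(\tfrac{1+\epsilon}{1-\epsilon}\big)^{|G_t|}=O(1)$ uniformly over $k\in M_1$. Since $|G_t|\le m_n$ and $\epsilon_n=o(1)$, for large $n$ this is at most $\exp(4m_n\epsilon_n)$, and $m_n\epsilon_n=O(1)$: from $\epsilon_n\lesssim\sqrt{m_n^2\Lambda_{m_n}\log r/n}$ together with $m_n\le(n/\log r)^{(1-d')/2}$ and $\Lambda_{m_n}\le C^2(n/\log r)^d$ (Condition \hyperref[cond_A2]{\rm (A2)}), one gets $m_n\epsilon_n\lesssim(n/\log r)^{(1+d)/2-d'}$, which is bounded because $d'\ge(1+d)/2$. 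Hence $DR(k,t)\lesssim(2/\lambda)^{|G_k|-|G_t|}$, as desired. I expect the Schur-complement eigenvalue bound to be the heart of the matter: the naive route of bounding the denominator below by $\big((1-\epsilon)\lambda+\eta\big)^{|G_k|}$ and the numerator above by $\big((1+\epsilon)\Lambda_{m_n}+\eta\big)^{|G_t|}$ breaks down because $\Lambda_{m_n}$ may diverge, so the nesting $\tilde H_t\subseteq\tilde H_k$ must be used to cancel the $G_t$-block; the only routine-but-fiddly point is the last verification that $m_n\epsilon_n=O(1)$.
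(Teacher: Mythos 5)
Your proof is correct and follows essentially the same route as the paper's: the same Schur-complement factorization that isolates the $G_t$-block (using $\sg(\beta_{0,G_k})=\sg(\beta_{0,G_t})$ for $k\supsetneq t$), the same lower bound of order $(1-\epsilon)\lambda$ on the Schur complement's eigenvalues via the restricted eigenvalue part of Condition \hyperref[cond_A2]{\rm (A2)} (the paper phrases it through $R_{k\setminus t}^{-1}$ being a principal submatrix of the inverse, which is equivalent to your $S\succeq\lambda_{\min}(M)I$), and the same eigenvalue-wise bound of the $G_t$-block ratio combined with $\epsilon|G_t|=O(1)$. Your per-eigenvalue bound $\{(1+\epsilon)\nu_j+\eta\}/\{(1-\epsilon)\nu_j+\eta\}\le(1+\epsilon)/(1-\epsilon)$ is in fact marginally cleaner than the paper's, which carries an extra term $\{(1-\epsilon)\lambda n\tau^2\}^{-1}$ and invokes Condition \hyperref[cond_A4]{\rm (A4)} to absorb it.
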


\begin{proof}
	Since $\sg(\beta_{0,G_t}) = \sg(\beta_{0,G_k})$ for any $k\in M_1$,
	\bea
	&& \det \Big\{ n^{-1}H_n(\beta_{0,G_k}) (1-\epsilon) + (n\tau)^{-2}I  \Big\} \\
	&=& \det \Big\{ \frac{1-\epsilon}{n} X_{G_k}^T \sg(\beta_{0,G_t}) X_{G_k} + (n\tau)^{-2}I  \Big\} \\
	&=& \det (A) \, \det \Big\{ \frac{1-\epsilon}{n} X_{G_{k\setminus t}}^T \sg(\beta_{0,G_t}) X_{G_{k\setminus t}}  + (n\tau)^{-2}I - \frac{(1-\epsilon)^2}{n^2} X_{G_{k\setminus t}}^T \sg(\beta_{0,G_t}) X_{G_{t}} A^{-1}  X_{G_{t}}^T \sg(\beta_{0,G_t}) X_{G_{k\setminus t}} \Big\} \\
	&\equiv& \det (A) \det(R_{k\setminus t}),
	\eea
	where 
	\bea
	A &=& \frac{1-\epsilon}{n} X_{G_t}^T \sg(\beta_{0,G_t}) X_{G_t}  + (n\tau)^{-2}I 
	\eea
	by the inversion formula of the block matrix.
	Then, we have
	\bea
	DR(k,t) &=& \frac{ \det \Big\{ \frac{1+\epsilon}{n} X_{G_t}^T \sg(\beta_{0,G_t}) X_{G_t}  + (n\tau)^{-2}I  \Big\} }{ \det \Big\{ \frac{1-\epsilon}{n} X_{G_t}^T \sg(\beta_{0,G_t}) X_{G_t}  + (n\tau)^{-2}I  \Big\} } \det (R_{k\setminus t}^{-1}) .
	\eea
	
	Since $R_{k\setminus t}^{-1}$ is a principal submatrix of $\big\{ (1-\epsilon)n^{-1} H_n(\beta_{0,G_k}) + (n\tau^2)^{-1} I \big\}^{-1}$,
	\bea
	\lambda_{\max}(R_{k\setminus t}^{-1})
	&\le& \lambda_{\max}\Big[  \Big\{ \frac{1-\epsilon}{n} H_n(\beta_{0,G_k}) + \frac{1}{n\tau^2}I  \Big\}^{-1}  \Big] \\
	&=& \Big[ \lambda_{\min} \Big\{ \frac{1-\epsilon}{n} H_n(\beta_{0,G_k}) + \frac{1}{n\tau^2}I  \Big\}  \Big]^{-1}\\
	&\le& \frac{1}{1-\epsilon} \Big[  \lambda_{\min} \Big\{  \frac{1}{n}H_n(\beta_{0,G_k}) \Big\} \Big]^{-1} 
	\,\,\le\,\, \frac{1}{(1-\epsilon)\lambda} .
	\eea
	Thus, we have
	\bea
	\det (R_{k\setminus t}^{-1}) 
	&\le& \big\{ (1-\epsilon)\lambda \big\}^{-(|G_k| - |G_t|)} \,\, \le \,\, \Big( \frac{2}{\lambda} \Big)^{|G_k|- |G_t|}.
	\eea

	Now, we complete the proof if we show that
	\bea
	\frac{ \det \Big\{ \frac{1+\epsilon}{n} X_{G_t}^T \sg(\beta_{0,G_t}) X_{G_t}  + (n\tau)^{-2}I  \Big\} }{ \det \Big\{ \frac{1-\epsilon}{n} X_{G_t}^T \sg(\beta_{0,G_t}) X_{G_t}  + (n\tau)^{-2}I  \Big\} } 
	&=& O(1).
	\eea
	Let $\lambda_1, \ldots, \lambda_{|G_t|}$ be eigenvalues of $n^{-1} H_n(\beta_{0,G_t})$.
	Then
	\bea
	\frac{ \det \Big\{ \frac{1+\epsilon}{n} X_{G_t}^T \sg(\beta_{0,G_t}) X_{G_t}  + (n\tau)^{-2}I  \Big\} }{ \det \Big\{ \frac{1-\epsilon}{n} X_{G_t}^T \sg(\beta_{0,G_t}) X_{G_t}  + (n\tau)^{-2}I  \Big\} }  
	&=& \prod_{j=1}^{|G_t|} \frac{(1+\epsilon)\lambda_j + (n\tau^2)^{-1} }{(1-\epsilon)\lambda_j + (n\tau^2)^{-1}} \\
	&\le& \Big\{  \frac{1+\epsilon}{1-\epsilon} + \frac{1}{(1-\epsilon)\lambda n \tau^2}  \Big\}^{|G_t|}  \\
	&=& \Big(  1 + \frac{3\epsilon}{1-\epsilon}   \Big)^{|G_t|} ,
	\eea
	since $\epsilon =  c\sqrt{m_n^2 \Lambda_{m_n} \log r /n } $ and $\epsilon n\tau^2 \gtrsim 1$ by Conditions \hyperref[cond_A2]{\rm (A2)} and \hyperref[cond_A4]{\rm (A4)}.
	The last term is bounded above by
	\bea
	(  1 +4\epsilon )^{|G_t|} &\lesssim& \exp \big\{  4 \epsilon |G_t| \big\} \,\,= \,\, O(1),
	\eea
	by Condition \hyperref[cond_A2]{\rm (A2)}.
\end{proof}

\begin{lemma}\label{lem_aux3}
	Let $b_n = (1+\delta^*)(1+2w) \log r$ such that $1+\delta > (1+\delta^*)(1+2w)$. Then, under Condition \hyperref[cond_A2]{\rm (A2)},
	\bea
	L_n(\what{\beta}_{G_k}) - L_n(\what{\beta}_{G_t})  
	&\le& b_n (|G_k| - |G_t|) + C
	\eea
	for some constant $C>0$ and any $k \in M_1$ with probability tending to 1. 
\end{lemma}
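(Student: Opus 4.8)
The plan is to compare each MLE $\what\beta_{G_k}$, $k\in M_1$, and $\what\beta_{G_t}$ against the common reference points $\beta_{0,G_k}$ and $\beta_{0,G_t}$. Since the true support sits inside $G_t\subseteq G_k$ (so $x_{iG_k}^T\beta_{0,G_k}=x_{iG_t}^T\beta_{0,G_t}$), we have $L_n(\beta_{0,G_k})=L_n(\beta_{0,G_t})$, hence
\[
L_n(\what\beta_{G_k})-L_n(\what\beta_{G_t})=\big[L_n(\what\beta_{G_k})-L_n(\beta_{0,G_k})\big]-\big[L_n(\what\beta_{G_t})-L_n(\beta_{0,G_t})\big].
\]
For the first bracket I would Taylor-expand $L_n$ around $\beta_{0,G_k}$ and use the lower Hessian bound $H_n(\tilde\beta_{G_k})\ge(1-\epsilon)H_n(\beta_{0,G_k})$ on the consistency neighborhood supplied by Lemmas A.1 and A.3 of \cite{Naveen:2018} (valid uniformly over $k\in M_1$ w.p.\ $\to1$, with $\epsilon=\epsilon_n=o(1)$ exactly as in the proof of Theorem \ref{thm:nosuper}, since $\tilde\beta_{G_k}$ lies between $\what\beta_{G_k}$ and $\beta_{0,G_k}$); maximizing the resulting concave quadratic in $\beta_{G_k}-\what\beta_{G_k}$ gives $L_n(\what\beta_{G_k})-L_n(\beta_{0,G_k})\le \tfrac{1}{2(1-\epsilon)}\,s_n(\beta_{0,G_k})^TH_n(\beta_{0,G_k})^{-1}s_n(\beta_{0,G_k})$. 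For the second bracket, plugging the near-optimal increment $(1+\epsilon)^{-1}H_n(\beta_{0,G_t})^{-1}s_n(\beta_{0,G_t})$ into $L_n$ and using the complementary bound $H_n(\tilde\beta_{G_t})\le(1+\epsilon)H_n(\beta_{0,G_t})$ yields the matching lower bound $L_n(\what\beta_{G_t})-L_n(\beta_{0,G_t})\ge\tfrac{1}{2(1+\epsilon)}\,s_n(\beta_{0,G_t})^TH_n(\beta_{0,G_t})^{-1}s_n(\beta_{0,G_t})$.

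Next I would recognize these quadratic forms as squared projections of the score noise. Writing $\xi=\sg^{-1/2}(E-\mu)$ and letting $P_k$ be the orthogonal projection onto the column space of $\sg^{1/2}X_{G_k}$, one has $s_n(\beta_{0,G_k})^TH_n(\beta_{0,G_k})^{-1}s_n(\beta_{0,G_k})=\|P_k\xi\|_2^2$ (and likewise for $t$). Because $G_t\subseteq G_k$ the ranges are nested, so $P_k-P_t$ is again an orthogonal projection, now of rank $|G_k|-|G_t|$, and $\|P_k\xi\|_2^2=\|P_t\xi\|_2^2+\|(P_k-P_t)\xi\|_2^2$. Combining with the two displays above,
\[
L_n(\what\beta_{G_k})-L_n(\what\beta_{G_t})\le\frac{1}{2(1-\epsilon)}\,\|(P_k-P_t)\xi\|_2^2+\frac{\epsilon}{1-\epsilon^2}\,\|P_t\xi\|_2^2 .
\]
The second term is the source of the additive constant $C$: $\|P_t\xi\|_2^2$ is a single $(1+\delta^*)$-sub-Gaussian quadratic form (no union bound needed), hence $O_P(|G_t|\vee\log r)\lesssim m_n\vee\log r$ w.p.\ $\to1$, and $\epsilon\,(m_n\vee\log r)=O(1)$ because $m_n^2\le(n/\log r)^{1-d'}$, $\Lambda_{m_n}\lesssim(n/\log r)^{d}$ and $d'\ge(1+d)/2$, which is precisely the role of the exponent constraint in Condition \hyperref[cond_A2]{\rm(A2)}.

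It then remains to bound $\|(P_k-P_t)\xi\|_2^2$ uniformly over $k\in M_1$. For a fixed $k$, writing $m=|G_k|-|G_t|$ and taking an orthonormal basis of the range of $P_k-P_t$ inside the column space of $\sg^{1/2}X_{G_k}$, condition \eqref{delta_star} shows that the corresponding coordinate vector of $\xi$ is $(1+\delta^*)$-sub-Gaussian in $\bbR^m$, so $\|(P_k-P_t)\xi\|_2^2$ is dominated (in the moment-generating sense) by $(1+\delta^*)\chi^2_m$; a Chernoff bound, using that the threshold-to-rank ratio $2(1-\epsilon)b_n/\log r=2(1-\epsilon)(1+\delta^*)(1+2w)$ diverges along $\log r$, gives
\[
\bbP\Big(\|(P_k-P_t)\xi\|_2^2>2(1-\epsilon)\,b_n\,m\Big)\le r^{-(1-o(1))(1+2w)m}.
\]
A union bound over the at most $\binom{r-|t|}{j}\le r^{j}\le r^{m}$ supersets $k$ of $t$ with $|k|-|t|=j$ (note $m\ge j$) then gives total exceedance probability $\lesssim\sum_{j\ge1}r^{-(2w-o(1))j}=o(1)$ since $w>0$, so on a probability $\to1$ event $\tfrac{1}{2(1-\epsilon)}\|(P_k-P_t)\xi\|_2^2\le b_n(|G_k|-|G_t|)$ for all $k\in M_1$ simultaneously, which with the previous paragraph yields the claim. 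The main obstacle is exactly this last step: one must upgrade the fixed-model $\chi^2$ tail to a uniform statement over the exponentially many models in $M_1$, and it is essential to have isolated the rank-$(|G_k|-|G_t|)$ projection $P_k-P_t$ rather than $P_k$, so that the tail exponent $\propto(1+2w)(|G_k|-|G_t|)$ can beat the $r^{|k|-|t|}$ combinatorial count—using $\|P_k\xi\|_2^2$ directly would be too weak whenever $|G_t|$ is large relative to $|G_k|-|G_t|$.
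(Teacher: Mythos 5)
Your proposal is correct and follows essentially the same route as the paper's proof: the same decomposition through $\beta_{0}$, the same $(1\pm\epsilon)$ Taylor/Hessian bounds leading to the bound $\tfrac{1}{2(1-\epsilon)}\,\tilde{U}^T(P_{G_k}-P_{G_t})\tilde{U}+\tfrac{\epsilon}{(1-\epsilon)(1+\epsilon)}\,\tilde{U}^TP_{G_t}\tilde{U}$ with $\tilde{U}=\sg^{-1/2}(E-\mu)$, and your explicit sub-Gaussian chi-square tail plus union bound over $M_1$ is precisely what the paper imports as a black box (Lemma A.2 of \cite{Naveen:2018}). One small repair: for the residual term bound the single fixed quadratic form by $\tilde{U}^TP_{G_t}\tilde{U}=O_P(|G_t|)$ and use $\epsilon m_n=O(1)$ as the paper does, since your weaker $O_P(|G_t|\vee\log r)$ statement would require $\epsilon\log r=O(1)$, which Conditions \hyperref[cond_A1]{\rm (A1)}--\hyperref[cond_A2]{\rm (A2)} do not guarantee.
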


\begin{proof}
	By Taylor's expansion and Lemma A.1 in \cite{Naveen:2018},
	\bea
	L_n(\what{\beta}_{G_k}) - L_n(\beta_{0,G_k})  &=& (\what{\beta}_{G_k} - \beta_{0,G_k})^T s_n(\beta_{0,G_k})  - \frac{1}{2} (\what{\beta}_{G_k} - \beta_{0,G_k})^T H_n(\tilde{\beta}_{G_k}) (\what{\beta}_{G_k} - \beta_{0,G_k}) \\
	&\le& (\what{\beta}_{G_k} - \beta_{0,G_k})^T s_n(\beta_{0,G_k})  - \frac{1-\epsilon}{2} (\what{\beta}_{G_k} - \beta_{0,G_k})^T H_n({\beta}_{0,G_k}) (\what{\beta}_{G_k} - \beta_{0,G_k}) 
	\eea
	for some $\|\tilde{\beta}_{G_k} - {\beta}_{0,G_k}\|_2 \le \| \what{\beta}_{G_k}  - {\beta}_{0,G_k}\|_2$.
	Note that by Lemmas A.1 and A.3 in \cite{Naveen:2018}, with probability tending to 1,
	\bea
	&& (\what{\beta}_{G_k} - \beta_{0,G_k})^T s_n(\beta_{0,G_k})  - \frac{1-\epsilon}{2} (\what{\beta}_{G_k} - \beta_{0,G_k})^T H_n({\beta}_{0,G_k}) (\what{\beta}_{G_k} - \beta_{0,G_k}) \\
	&\le& \frac{1}{2(1-\epsilon)} s_n(\beta_{0,G_k})  ^T  H_n({\beta}_{0,G_k})^{-1}   s_n(\beta_{0,G_k})  \\
	&=&  \frac{1}{2(1-\epsilon)} (E- \mu)^T X_{G_k} H_n({\beta}_{0,G_k})^{-1}  X_{G_k}^T (E-\mu)  \\
	&\equiv& \frac{1}{2(1-\epsilon)}  \tilde{U}^T  P_{G_k} \tilde{U}
	\eea
	for any $k \in M_1$, where $P_{G_k} = \sg^{1/2} X_{G_k} H_n(\beta_{0,G_k})^{-1} X_{G_k}^T \sg^{1/2}$, $\tilde{U} = \sg^{-1/2}(E-\mu)$ and $\sg = diag(\sigma_1^{2}, \ldots, \sigma_n^{2})$.
	Similarly, with probability tending to 1,
	\bea
	L_n(\what{\beta}_{G_t}) - L_n(\beta_{0,G_t}) 
	&\ge& \frac{1}{2(1+\epsilon)}  \tilde{U}^T  P_{G_t} \tilde{U}.
	\eea
	Therefore, with probability tending to 1,
	\bea
	L_n(\what{\beta}_{G_k}) - L_n(\what{\beta}_{G_t})  
	&=& L_n(\what{\beta}_{G_k}) - L_n(\beta_{0,G_k})   - \Big\{ L_n(\what{\beta}_{G_k}) - L_n(\beta_{0,G_k})   \Big\}  \\
	&\le& \frac{1}{2(1-\epsilon)} \tilde{U}^T \big( P_{G_k} - P_{G_t} \big) \tilde{U} + \frac{\epsilon}{(1-\epsilon)(1+\epsilon)} \tilde{U}^T P_{G_t} \tilde{U} 
	\eea
	for any $k \in M_1$.

	Note that, by Lemma A.2 in \cite{Naveen:2018} and Condition \hyperref[cond_A2]{\rm (A2)}, with probability tending to 1,
	\bea
	\frac{1}{2(1-\epsilon)} \tilde{U}^T \big( P_{G_k} - P_{G_t} \big) \tilde{U}  &\le& b_n(|G_k|-|G_t|)
	\eea
	for any $k \in M_1$. 
	Again by Lemma A.2 in \cite{Naveen:2018} and Condition \hyperref[cond_A2]{\rm (A2)}, we have
	\bea
	\frac{\epsilon}{(1-\epsilon)(1+\epsilon)} \tilde{U}^T P_{G_t} \tilde{U} 
	&=& O_p(1)
	\eea
	because $\epsilon |G_t| \le \epsilon m_n = O(1)$ and $\epsilon = o(1)$.	
\end{proof}

\bibliographystyle{plainnat} 
\bibliography{references}
\end{document}